\newtheorem{theorem}{Theorem}
\newtheorem{lemma}{Lemma}
\newtheorem{obs}{Observation}
\title{Improved Total Domination and Total Roman Domination in Unit Disk Graphs}
\author{Sasmita Rout}
\author{Gautam K. Das}
\affil{Indian Institute of Technology Guwahati, India  781039\\ \{sasmita18,gkd\}@iitg.ac.in}
\date{}
\begin{document}
\thispagestyle{empty}
\maketitle
\begin{abstract}
Let $G=(V, E)$ be a simple undirected graph with no isolated vertex. A set $D_t\subseteq V$ is a total dominating set of $G$ if $(i)$ $D_t$ is a dominating set, and $(ii)$ the set $D_t$ induces a subgraph with no isolated vertex. The total dominating set of minimum cardinality is called the minimum total dominating set, and the size of the minimum total dominating set is called the total domination number ($\gamma_t(G)$). Given a graph $G$, the total dominating set (TDS) problem is to find a total dominating set of minimum cardinality. A Roman dominating function (RDF) on a graph $G$ is a function $f:V\rightarrow \{0,1,2\}$ such that each vertex $v\in V$ with $f(v)=0$ is adjacent to at least one vertex $u\in V$ with $f(u)=2$. A RDF $f$ of a graph $G$ is said to be a total Roman dominating function (TRDF)  if the induced subgraph of $V_1\cup V_2$ does not contain any isolated vertex, where $V_i=\{u\in V|f(u)=i\}$. Given a graph $G$, the total Roman dominating set (TRDS) problem is to minimize the weight, $W(f)=\sum_{u\in V} f(u)$, called the total Roman domination number ($\gamma_{tR}(G)$). In this paper, we are the first to show that the TRDS problem is NP-complete in unit disk graphs (UDGs). Furthermore, we propose a $7.17\operatorname{-}$ factor approximation algorithm for the TDS problem and a $6.03\operatorname{-}$ factor approximation algorithm for the TRDS problem in geometric unit disk graphs. The running time for both algorithms is notably bounded by $O(n\log{k})$, where $n$ represents the number of vertices in the given UDG and $k$ represents the size of the independent set in (i.e., $D$ and $V_2$ in TDS and TRDS problems, respectively) the given UDG.
\end{abstract}
\textbf{keywords: }{Dominating set, Total dominating set, Total Roman dominating set, Unit disk graphs, NP-complete, Approximation algorithms}
\section{Introduction}\label{Introduction}
Let $G$ be a simple undirected graph that may contain multiple components, but no component is an isolated vertex. In $G$, $V(G)$ and $E(G)$ represent the vertex and edge sets, respectively. $N_G(v)$ denotes the open neighborhood of $v$, and it is the set of all the neighbors (excluding itself) of $v$ in $G$, i.e., $N_G(v)=\{u\in V:uv\in E(G)\}$. Similarly, $N_G[v]$ denotes closed neighborhood  of $v$ and it is defined as $N_G[v]=N_G(v)\cup \{v\}$. $G[S]$ is the induced subgraph\footnote{for each $u,v\in S$, $uv\in E(G[S])$ if and only if $uv\in E(G)$} in $G$, where $S\subseteq V(G)$. The dominating set (DS) is an ordered partition of $V(G)$, say $(V_0,V_1)$, induced by a function $f:V(G)\rightarrow \{0,1\}$ such that $(i)$ $V_i=\{v\in V(G):f(v)=i\}$; $(ii)$ for each $v\in V_0$, there exists at least a vertex $u\in V_1$ such that $uv\in E(G)$. The dominating set with minimum $|V_1|$ is called the minimum dominating set, and the size of the minimum dominating set is called the domination number ($\gamma(G)$), i.e., $\gamma(G)=|V_1|$. In $G$, a vertex $v\in V(G)$ and a subset $S \subseteq V(G)$ dominate $N_G[v]$ and $\bigcup_{v \in S}N_G[v]$, respectively. A subset $D_t\subseteq V(G)$ is a total dominating set (TDS) if $(i)$ $D_t$ is a dominating set (\textit{domination} property), and $(ii)$ the induced subgraph $G[D_t]$ contains no isolated vertex (\textit{total} property). The total dominating set with minimum cardinality is called the minimum total dominating set, and the cardinality of the set is called the total domination number, $\gamma_t(G)$. Given a graph $G$, the objective of the minimum total dominating set problem is to find the total domination number of $G$.
The Roman dominating set (RDS) is an ordered partition of $V(G)$, say $(V_0,V_1,V_2)$ induced by a function, $f:V(G)\rightarrow \{0,1,2\}$ called Roman dominating function (RDF) such that \textit{(i)} $V_i=\{v\in V(G):f(v)=i\}$; \textit{(ii)} for each $v\in V_0$, there exists at least a vertex  $u\in V_2$ such that $uv\in E(G)$. The RDF with minimum weight, i.e., $W(f)=\sum_{u\in V(G)} f(u)$, is called the Roman domination number, $\gamma_R(G)$ and the corresponding ordered partition of $V$ is called the minimum Roman dominating set. The total Roman dominating set (TRDS) is an ordered partition of $V$, say $(V_0,V_1,V_2)$ induced by a function, $f:V(G)\rightarrow \{0,1,2\}$  called total Roman dominating function (TRDF) such that (i) $f$ is a Roman dominating function (\textit{Roman} property); (ii) the induced graph $G[V_1 \cup V_2]$ does not contain any isolated vertex (\textit{total} property). We define the total Roman domination number $\gamma_{tR}(G)$ as the minimum weight among all possible TRDFs on $G$. Given a graph $G$, the objective of the minimum total Roman dominating set problem is to find a total Roman dominating function of minimum weight.
\par Domination, a fundamental concept in graph theory, serves as a pivotal element across diverse fields such as computer networks, network security, telecommunication networks, and social networks. Unit disk graphs (UDGs) stand out as geometric representations where nodes correspond to points in the Euclidean plane, and edges connect nodes within a specified distance threshold. These graphs can be used as a prototype in modeling wireless communication networks, mobile Ad-Hoc networks (MANETs), and sensor networks. 
The Roman Domination Set (RDS) problem aids in identifying optimal locations for enhancing security features within the network while minimizing costs. Simultaneously, the Total Roman Domination Set (TRDS) problem ensures that at least one monitoring node remains informed about any potential faults within the network. This approach contributes to maintaining network integrity and security effectively.
\subsection{Related Work}\label{Related_Work}
 For a given graph $G$, the decision version of the dominating set decision problem is to find a dominating set of size at most $k$, where $k$ is a positive integer. In computational complexity theory, it is a classical NP-complete  problem \cite{hartmanis1982computers}. The domination problem and its variations are studied extensively in the literature \cite{haynes1998fundamentals,haynes2020topics,hedetniemi1991bibliography}. In $1990$, Clark et al. \cite{clark1990unit} showed that many classical problems, including domination, that are NP-complete for general graphs are also NP-complete for unit disk graphs. Subsequently, in 1994,  Marathe et al. \cite{marathe1995simple} gave a $5$-factor approximation algorithm for the dominating set problem in UDGs. In 1980, Cockayne et al. \cite{cockayne1980total}  introduced the concept of the total dominating set, and in $1983$, Pfaff et al. \cite{pfaff1983np} proved that the TDS problem is NP-complete for general and bipartite graphs. The TDS problem is polynomially solvable for trees, star graphs, complete graphs, cycles, and paths \cite{amos2012total,laskar1984algorithmic}. The detailed literature on TDS can be found in \cite{henning2009survey,henning2013total,sun1995upper}. Some more variations on total domination can be found in \cite{henning2009survey,henning2013total,rout2024semi,sun1995upper}.
In $2004$,  Cockayne et al. \cite{cockayne2004roman} introduced a new variation on domination called Roman domination. It was based on legion deployment with limited resources.  Shang et al. \cite{shang2010roman} introduced the concept of Roman dominating set (RDS) in UDGs and gave $5\operatorname{-}$factor approximation algorithm for Roman domination in UDGs. One of the variations, called total Roman domination, was introduced in \cite{liu2013roman}. In \cite{ahangar2016total}, authors established lower and upper bounds on the total Roman dominating set (TRDS). They related the total Roman domination number ($\gamma_{tR}(G)$) to domination parameters such as the domination number ($\gamma(G)$), Roman domination number ($\gamma_R(G)$) and total domination number ($\gamma_t(G)$). In \cite{martinez2020further}, authors gave a lower and an upper bound for $\gamma_{tR}(G)$, i.e.,  for any graph $G$ with neither isolated vertex nor components isomorphic to $K_2$, $\gamma_{t2}(G)+\gamma(G)\leq \gamma_{tR}(G)\leq \gamma_R(G)+\gamma(G)$ which was even tighter than the well known bound, $2\gamma(G)\leq \gamma_{tR}(G)\leq 3\gamma(G)$ in \cite{ahangar2016total}. In $2020$, A. Poureidi \cite{poureidi2020total} gave a linear time algorithm to compute the total Roman domination number for proper interval graphs. For general graphs, more results and variations on TRDS can be found in \cite{amjadi2017total,hao2020total,shao2019total}.
 \par In $2021$, Jena and Das \cite{jenatotal} demonstrated the NP-completeness of the Total Dominating Set (TDS) problem in Unit Disk Graphs (UDGs). They introduced an 8-factor approximation algorithm with a time complexity of $O(|V|\log{|D_t|})$, where $|D_t|$ denotes the size of the total dominating set. Additionally, the authors presented a Polynomial Time Approximation Scheme (PTAS) that operates in $O(k^2n^{2(\lceil{2\sqrt{2}}k\rceil)^2})$ time, capable of computing a total dominating set of size at most $(1+\frac{1}{k})^2\gamma_t(G)$, where $k\geq 1$. Despite this, the scheme achieves a total dominating set size of at most $4\gamma_t(G)$ in $O(n^{18})$ time, which is notably high. Moreover, as efforts for better approximation are made, the time complexity further escalates. Hence, there exists room for enhancing both the approximation factor and the running time efficiency of existing approaches.
\par Recently, in 2024, Rout et al. \cite{rout2024total} introduced the TRDS problem in UDGs and proposed a $10.5\operatorname{-}$factor approximation algorithm with running time $O(n\log{k})$. Additionally, in the same paper, they provided a $7.79\operatorname{-}$factor approximation algorithm for the TDS problem in UDGs, surpassing the approximation factor provided in \cite{jenatotal}, though with a slightly higher time complexity. In this paper, we have given
\subsection{Our Contribution}\label{Our_Contribution}
The subsequent sections of this paper are structured as follows. In Section~\ref{sec:Preliminaries}, we provide the necessary preliminaries and define the relevant notations. Following this, in Section~\ref{sec:NPC}, we establish the NP-completeness of the Total Roman Dominating Set (TRDS) problem for unit disk graphs. Sections~\ref{sec:aproxTotal} and \ref{sec:aproxTotalRoman} present our proposed approximation algorithms, namely TDS-UDG-SC and TRDF-UDG-SC, addressing the Total Dominating Set (TDS) and TRDS problems in UDGs. These algorithms offer approximation factors of $7.062$ and $5.92$ respectively. Lastly, in Section~\ref{sec:conclusion}, we conclude our findings.
\section{Preliminaries}\label{sec:Preliminaries}
\vspace{-.2cm}
This section begins by introducing necessary notations and redefining relevant definitions crucial to our exposition. Let $P=\{p_1,p_2,\dots,p_n\}$ represent a set of $n$ points in $\mathbb{R}^2$. A graph $G=(V,E)$ is said to be a geometric unit disk graph (UDG) corresponding to the point set $P$ if there exists a one-to-one correspondence between each $v_i\in V(G)$ and $p_i\in P$, and an edge $v_iv_j\in E(G)$ if and only if the Euclidean distance $d(p_i,p_j)\leq 1$. Here, $d(.,.)$ signifies the Euclidean distance between two points in $\mathbb{R}^2$. We denote the unit disk centered at point $p\in P$ as $\Delta(p)$, and collectively represent the set of disks $\Delta(P)=\{\Delta(p): p\in P\}$. The set of disks $\Delta(P)$ is considered independent if, for every pair $p,q\in P$, $p$ does not lie within the disk centered at $q$. Additionally, in this article, we interchangeably refer to points as vertices or nodes. For any subset $S\subseteq V$, $S$ is labeled independent if, for every pair $p,q\in S$, the distance between $p$ and $q$ exceeds $1$.

Furthermore, this section revisits some of the known lemmas and theorems concerning UDGs, Total Dominating Sets (TDS), and Total Roman Dominating Sets (TRDS). These established results are later applied in Sections $\ref{sec:NPC}$, $\ref{sec:aproxTotal}$, and $\ref{sec:aproxTotalRoman}$. Additionally, we highlight certain observations deemed pertinent for our proofs.
  \begin{lemma}\cite{marathe1995simple}\label{lem:5disks}
Consider a point $p\in \mathbb{R}^2$. If $S$ is the set of independent disks of radius $1$ such that each disk in $S$ contains the point $p$, then  $|S|\leq 5$.
\end{lemma}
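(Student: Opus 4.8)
The plan is to translate the combinatorial hypotheses into purely metric statements about the disk centers and then run an angular packing argument around $p$. Write $c_1,\dots,c_m$ for the centers of the disks in $S$. Since each disk $\Delta(c_i)$ has radius $1$ and contains $p$, we have $d(c_i,p)\le 1$ for every $i$, so all centers lie in the closed unit disk around $p$; and by the definition of independence recalled in this section, the centers are pairwise at distance strictly greater than $1$, i.e. $d(c_i,c_j)>1$ for $i\ne j$. The lemma thus reduces to the packing question: how many points can lie in a closed disk of radius $1$ while being pairwise more than $1$ apart? If $m\le 1$ there is nothing to prove, and if $m\ge 2$ then no center can equal $p$, since $c_i=p$ would force $d(c_i,c_j)=d(p,c_j)\le 1$ for any other center $c_j$, contradicting independence. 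Hence all the rays from $p$ to the centers are well defined.

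The crux is the following angular separation claim: for any two distinct centers $c_i,c_j$, the subtended angle $\angle c_i p c_j$ is strictly greater than $60^\circ$. To prove it I would look at the triangle $p c_i c_j$. The side $c_ic_j$ has length $d(c_i,c_j)>1$, whereas the two sides incident to $p$ have lengths $d(p,c_i)\le 1$ and $d(p,c_j)\le 1$; therefore $c_ic_j$ is strictly the longest side of the triangle. Since in any triangle the largest angle lies opposite the longest side, the angle at $p$ is strictly larger than each of the other two angles, and as the three angles sum to $180^\circ$ this forces $\angle c_i p c_j>60^\circ$. (Equivalently, one reads this off the law of cosines, which also handles the otherwise degenerate collinear case, already excluded by $d(c_i,c_j)>1$.)

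With the claim in hand I would finish by a pigeonhole argument on directions. Order the centers by their polar angle as seen from $p$; no two share a direction, since that would give a subtended angle of $0<60^\circ$. The consecutive angular gaps around $p$ are all positive, sum to exactly $360^\circ$, and each one exceeds $60^\circ$ by the claim; hence $60^\circ\cdot m<360^\circ$, giving $m\le 5$, as required.

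The step I expect to be the main obstacle is the strict $60^\circ$ separation, for two reasons. First, the centers need not lie on the boundary circle, so one cannot simply read off chord lengths on the unit circle; the longest-side argument is precisely what copes with the varying distances $d(c_i,p)\le 1$. Second, strictness is essential and delicate: a merely non-strict bound $\angle c_i p c_j\ge 60^\circ$ would only yield $m\le 6$ (the regular hexagon of unit-spaced points realizes that case), and the strict inequality $>60^\circ$ is exactly what the strict independence condition $d(c_i,c_j)>1$ buys us.
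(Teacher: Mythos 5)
Your proof is correct: the reduction to centers pairwise more than $1$ apart inside the closed unit disk around $p$, the strict $60^\circ$ angular separation via the longest-side (or law-of-cosines) argument, and the pigeonhole on angular gaps summing to $360^\circ$ together give $|S|\le 5$. The paper itself states this lemma as a citation to Marathe et al.\ and offers no proof, and your argument is essentially the standard one from that source, so there is nothing to reconcile.
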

 \begin{lemma}\cite{da2014efficient}\label{lem:sub-5}
Given the geometric representation of a UDG $G=(V,E)$, there exists a $\frac{44}{9}\operatorname{-}$approximation  algorithm for the minimum dominating set problem. 
\end{lemma}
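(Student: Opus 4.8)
The plan is to establish this in two layers: a clean combinatorial skeleton that already yields the classical factor of $5$, followed by a geometric refinement that shaves the ratio down to $\frac{44}{9}$. First I would recall that any maximal independent set $I \subseteq V$ is automatically a dominating set, since a vertex adjacent to no member of $I$ could be added to $I$, contradicting maximality. Such a set is efficiently computable by sweeping the points through a grid of cell width $\frac{1}{\sqrt 2}$, so that any two points in a common cell are adjacent and every potential neighbour of a point lies in one of a constant number of nearby cells; this supplies the algorithm essentially for free.

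For the approximation guarantee I would set up a charging argument. Let $D^*$ be a minimum dominating set. Since $D^*$ dominates every vertex of $I$, assign each $v \in I$ to some $w \in D^*$ with $v \in N_G[w]$. Viewing each independent vertex as the centre of a unit disk, the disks assigned to a single $w$ are pairwise independent and all contain the point $w$, so by Lemma~\ref{lem:5disks} at most $5$ vertices of $I$ are charged to any one $w$. Hence $|I| \le 5|D^*|$, the familiar $5$-approximation. The entire difficulty of reaching $\frac{44}{9}$ lies in replacing this uniform per-vertex bound of $5$ by a smaller \emph{amortized} bound.

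The refinement I would pursue rests on the rigidity of the extremal configuration. The count of $5$ in Lemma~\ref{lem:5disks} is attained only when the five centres lie essentially on the boundary circle of radius $1$ about $w$, spaced close to $72^\circ$ apart; any slack in one angular gap forces the count below $5$. I would make this quantitative by proving a geometric lemma that bounds, as a function of the angular gaps, how close two centres charged to $w$ may be, and then showing that a vertex of $D^*$ receiving the full charge $5$ consumes so much of the local packing budget that a neighbouring saturated vertex of $D^*$ is excluded. Converting this local exclusion into a global average is a discharging computation: each over-loaded vertex of $D^*$ must be adjacent, in a proximity graph on $D^*$, to vertices carrying strictly smaller charge, and redistributing the excess yields total charge at most $\frac{44}{9}|D^*|$.

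The step I expect to be the genuine obstacle is precisely this geometric discharging: ruling out dense clusters of saturated optimal vertices demands a careful Euclidean case analysis of how pentagonal $5$-configurations can be packed side by side, and the exact constant $\frac{44}{9}$ should emerge only from optimizing that packing. The combinatorial skeleton and the grid-based implementation are routine; it is the trigonometry controlling the angular gaps, together with the bookkeeping of the charge redistribution, that carries all the weight.
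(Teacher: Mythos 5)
This lemma is imported verbatim from \cite{da2014efficient}; the paper offers no proof of it, so the only meaningful comparison is with the cited source, and against that your plan has a fatal structural flaw rather than a fixable gap. Your entire refinement aims to prove that \emph{every} maximal independent set $I$ satisfies $|I|\le \frac{44}{9}|D^*|$, by discharging excess charge from ``saturated'' optimal vertices to under-loaded neighbours in $D^*$. That inequality is false. Take a point $w$ and five points on the unit circle centred at $w$ at angular spacing $72^\circ$: their pairwise distances are $2\sin 36^\circ \approx 1.176 > 1$, so they are independent, each is adjacent to $w$, and on the six-vertex UDG they form a maximal independent set of size $5$ while $\{w\}$ is a dominating set of size $1$. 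Placing $k$ such gadgets far apart gives a UDG with $\gamma(G)=k$ on which a (greedily obtainable) maximal independent set has size $5k$. In this instance every vertex of $D^*$ carries the full charge $5$ and has no neighbour in $D^*$ to which excess could be shipped, so no discharging scheme can pull the average below $5$. The ``local exclusion'' you hope for simply does not exist: saturated dominators need not be near one another.

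The consequence is that no algorithm whose output is a maximal independent set can have approximation ratio better than $5$, so the constant $\frac{44}{9}$ cannot emerge from sharpening Lemma~\ref{lem:5disks} by trigonometry and charge redistribution. The algorithm of \cite{da2014efficient} is of a different type: it partitions the plane into a grid of small cells and solves the induced local domination subproblems (near-)optimally, with the factor $\frac{44}{9}$ arising from a counting argument over how many local subproblems a single disk of an optimal solution can serve; the output is not an independent set at all. Your first layer (maximal independent set, grid implementation, the $5$-factor via Lemma~\ref{lem:5disks}) is correct and standard, but the second layer, which you yourself flag as carrying all the weight, is aimed at a false statement and cannot be completed. (Incidentally, the same counterexample shows that the step in Section~\ref{sec:aproxTotal} of the paper which invokes this lemma to conclude $|D|\le\frac{44}{9}|D^*|$ for the greedily constructed maximal independent set $D$ deserves scrutiny for exactly the same reason.)
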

\begin{obs}\label{obs:minimum_V1}
   If $f=(V_0,V_1,V_2)$ is a TRDF on $G$, which attains minimum $|V_1|$, then the induced subgraph on $V_1$ and $V_2$ i.e., $G[V_1\cup V_2]$ does not contain a path\footnote{a sequence of non-repeated vertices connected through edges present in a graph} of three vertices (say, $x,y$ and $z$) with Roman value $1$ each (i.e., $f(x)=1$, $f(y)=1$ and $f(z)=1$).
\end{obs}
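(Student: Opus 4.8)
The plan is to argue by contradiction with a weight-preserving local exchange that strictly decreases $|V_1|$. Assume $f=(V_0,V_1,V_2)$ is a minimum-weight TRDF attaining the minimum possible $|V_1|$ among all such, yet $G[V_1\cup V_2]$ contains a path on three vertices $x,y,z$ (with edges $xy,yz\in E$) satisfying $f(x)=f(y)=f(z)=1$. I would construct a TRDF $f'$ with $W(f')=W(f)$ but strictly smaller $|V_1|$, contradicting the choice of $f$.

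The exchange I have in mind promotes the middle vertex and demotes one endpoint: put $f'(y)=2$ and $f'(x)=0$, and keep every other value unchanged, so in particular $f'(z)=1$. Adding $1$ at $y$ and subtracting $1$ at $x$ leaves the total weight unchanged, while $x$ and $y$ both leave $V_1$ and only $z$ survives among the three; hence $|V_1|$ drops by $2$. Checking the Roman property is the easy half: the sole vertex whose value decreases is $x$, and $x$ is now dominated by its neighbour $y$, where $f'(y)=2$; no other vertex can lose a dominator, since a value-$1$ vertex Roman-dominates nobody, so turning $x$ from $1$ into $0$ destroys no previously used domination.

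The main obstacle is the total property, i.e.\ keeping the induced subgraph on the new $V_1\cup V_2$ free of isolated vertices. Writing $S=V_1\cup V_2$, the new support set is $S'=S\setminus\{x\}$, because $y$ merely changes value and stays in the set; the pair $y,z$ remains mutually adjacent and is therefore safe. The danger is a vertex $w\in S\setminus\{x,y,z\}$ whose only neighbour in $S$ is $x$, which becomes isolated once $x$ is demoted. My plan is not to commit to an endpoint in advance: the symmetric move demoting $z$ instead of $x$ is equally available, and I would try to show that at least one endpoint admits a safe demotion. The hard case, on which the argument genuinely hinges, is when such a private dependent $w$ carries value $2$; then $w$ cannot be cheaply re-attached, and demoting either endpoint may isolate it. In that situation I expect to need either the complementary move (upgrade an endpoint to $2$ and demote the middle $y$, using the endpoints' dependents to re-support the structure) or, failing that, a global re-routing of the value-$1$ vertices that preserves $|V_1|$ while destroying the path. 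Establishing that one of these always succeeds---equivalently, that a minimizer free of a $1$-$1$-$1$ path can always be reached---is the crux, and it is precisely the total property that makes this step delicate.
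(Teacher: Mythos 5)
The paper states this observation without any proof, so there is nothing official to measure you against; your attempt has to stand on its own. The first half of what you do is right: the promote-the-middle/demote-an-endpoint exchange preserves the weight, drops $|V_1|$ by two, and your check of the Roman property (a value-$1$ vertex Roman-dominates nobody, and the demoted $x$ is rescued by its new value-$2$ neighbour $y$) is correct. You also correctly identify the only possible failure mode, namely a vertex $w\in V_1\cup V_2$ whose unique neighbour in $V_1\cup V_2$ is the vertex being demoted. But the case you flag as ``the crux'' is not just hard --- it is fatal, and the proof cannot be completed because the statement, read in the only non-trivial way (minimum $|V_1|$ among minimum-weight TRDFs), is false. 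Take the tree on the path $w_x\,x\,y\,z\,w_z$, attach one further neighbour $w_y$ to $y$, and attach three pendant leaves to each of $w_x$, $w_y$, $w_z$; this tree embeds as a unit disk graph. Set $f(w_x)=f(w_y)=f(w_z)=2$, $f(x)=f(y)=f(z)=1$, and $0$ elsewhere. Partitioning the vertices into the three branches around $w_x$, $w_y$, $w_z$ shows each branch must carry weight at least $3$, with equality only if its hub gets value $2$ and exactly one of the hub's neighbours gets value $1$; hence $\gamma_{tR}=9$ and \emph{every} minimum-weight TRDF has $|V_1|=3$. The displayed $f$ therefore attains the minimum possible $|V_1|$, yet $x,y,z$ is a path of three value-$1$ vertices in $G[V_1\cup V_2]$: here $w_x$, $w_y$, $w_z$ are exactly the private value-$2$ dependents you worried about, one blocking each of the three candidate demotions.

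So the honest conclusion is that no local or global rerouting rescues the argument in general. The observation survives only under readings that change its content: if ``minimum $|V_1|$'' ranges over all TRDFs regardless of weight, the all-$2$ assignment forces $|V_1|=0$ and the claim is vacuous; if the three-vertex path is required to be an entire connected component of $G[V_1\cup V_2]$, then no outside vertex can depend on $x$, and your exchange goes through verbatim. It is worth noting that the paper never actually invokes this observation, and that the closely related Lemma~\ref{lem:G'} relies on a similar exchange that silently increases the weight; if you need a statement of this type, you should prove the component version or add a hypothesis excluding the configuration above.
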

\begin{obs}\label{obs:D_T}\cite{goddard2014semitotal} 
   If $G$ is a graph with no isolated vertex, then $\gamma(G)\leq \gamma_t(G)$.
\end{obs}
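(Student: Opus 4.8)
The plan is to observe that this inequality is essentially definitional: every total dominating set is, by its very definition, also a dominating set, so a minimum total dominating set furnishes a particular dominating set whose cardinality bounds $\gamma(G)$ from above. No clever construction is needed; the content lies entirely in unpacking the definitions, together with a small sanity check on where the hypothesis is used.

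First I would pin down the role of the no-isolated-vertex hypothesis. An isolated vertex $v$ can never satisfy the \emph{total} property, since any set containing $v$ induces a subgraph in which $v$ is isolated, and $v$ cannot be placed in the dominating role for any other vertex either; consequently, if $G$ had an isolated vertex then no total dominating set would exist and $\gamma_t(G)$ would be undefined. Assuming $G$ has no isolated vertex guarantees that at least one total dominating set exists (for instance $V(G)$ itself), so that $\gamma_t(G)$ is a well-defined finite quantity and the comparison $\gamma(G)\le\gamma_t(G)$ is meaningful. I would then recall the definition verbatim: a set $D_t\subseteq V(G)$ is a total dominating set exactly when it satisfies the \emph{domination} property (it is a dominating set) and the \emph{total} property ($G[D_t]$ contains no isolated vertex). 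Condition (i) already asserts that any total dominating set is a dominating set.

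To finish, let $D_t$ be a minimum total dominating set, so $|D_t|=\gamma_t(G)$. By the definition just recalled, $D_t$ is in particular a dominating set, and since $\gamma(G)$ is by definition the minimum cardinality taken over all dominating sets of $G$, we obtain $\gamma(G)\le|D_t|=\gamma_t(G)$, as claimed. I do not expect any genuine obstacle here; the single conceptual point is recognising that the class of total dominating sets is contained in the class of dominating sets, and that the no-isolated-vertex hypothesis is precisely what is required for the former class to be nonempty.
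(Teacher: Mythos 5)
Your argument is correct: the paper states this observation without proof (citing the literature), and your definitional argument --- that every total dominating set is in particular a dominating set, so the minimum over the smaller class bounds $\gamma(G)$ from above --- is exactly the standard justification, with the no-isolated-vertex hypothesis correctly identified as what makes $\gamma_t(G)$ well defined. Nothing is missing.
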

\begin{theorem}\label{th:DomRom}\cite{ahangar2016total}
If $G$ is a graph with no isolated vertex, then $2\gamma(G)\leq \gamma_{tR}(G)$.
\end{theorem}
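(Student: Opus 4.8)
The plan is to take a minimum total Roman dominating function $f=(V_0,V_1,V_2)$, so that $\gamma_{tR}(G)=W(f)=|V_1|+2|V_2|$, and to build from it a dominating set $D$ of $G$ whose size is at most $|V_2|+\tfrac{1}{2}|V_1|$. Since this yields $2\gamma(G)\le 2|D|\le 2|V_2|+|V_1|=\gamma_{tR}(G)$, the theorem follows. The whole difficulty is therefore to dominate $G$ while ``paying'' only half of $|V_1|$ for the weight-$1$ vertices; the weight-$2$ vertices of $V_2$ can be paid for in full.

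First I would put all of $V_2$ into $D$. Because $f$ is a Roman dominating function, every vertex of $V_0$ has a neighbour in $V_2$, so $V_0$ (and $V_2$ itself) is already dominated by $D$. It remains to dominate $V_1$ cheaply. To this end I would split $V_1$ according to the induced subgraph $G[V_1]$: let $V_1^{I}$ be the set of vertices isolated in $G[V_1]$ and let $V_1'=V_1\setminus V_1^{I}$. The \emph{total} property of $f$ guarantees that each $v\in V_1$ has a neighbour in $V_1\cup V_2$; hence every $v\in V_1^{I}$, having no neighbour in $V_1$, must have a neighbour in $V_2$ and is therefore already dominated by $D$.

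The key step is to dominate $V_1'$ using at most $\tfrac{1}{2}|V_1'|$ additional vertices. I would first check that $G[V_1']$ has no isolated vertex: if $u\in V_1'$ then $u$ has a neighbour $w\in V_1$ in $G[V_1]$, and $w$ cannot lie in $V_1^{I}$ (such a $w$ would have no neighbour at all in $G[V_1]$), so $w\in V_1'$ and $u$ retains a neighbour inside $G[V_1']$. Now I would invoke the classical fact of Ore that every graph without isolated vertices on $m$ vertices admits a dominating set of size at most $m/2$ (equivalently, the complement of any minimal dominating set is again a dominating set). Applying it to $G[V_1']$ produces a set $D_1\subseteq V_1'$ with $|D_1|\le \tfrac{1}{2}|V_1'|\le \tfrac{1}{2}|V_1|$ that dominates every vertex of $V_1'$.

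Finally I would set $D=V_2\cup D_1$ and verify that it dominates all of $V=V_0\cup V_1\cup V_2$: the vertices of $V_0$, of $V_2$, and of $V_1^{I}$ are dominated by $V_2$, while the vertices of $V_1'$ are dominated by $D_1$. Thus $\gamma(G)\le|D|\le|V_2|+\tfrac{1}{2}|V_1|$, and multiplying by $2$ gives the claim. I expect the only genuinely delicate point to be the verification that deleting the $G[V_1]$-isolated vertices $V_1^{I}$ does not create new isolated vertices in $G[V_1']$, since this is exactly what lets Ore's bound apply and keeps the cost of covering $V_1$ down to $\tfrac{1}{2}|V_1|$.
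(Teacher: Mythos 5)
The paper does not prove this statement at all: it is imported verbatim by citation from \cite{ahangar2016total} and used as a black box in the analysis of Equation~\ref{eq-6}, so there is no in-paper argument to compare against. Your proof is correct and self-contained. The decomposition is sound: putting $V_2$ into $D$ handles $V_0$ and $V_2$ by the \emph{Roman} property; the vertices of $V_1$ that are isolated in $G[V_1]$ are forced by the \emph{total} property to have a neighbour in $V_2$ and so come for free; and your check that deleting $V_1^{I}$ cannot isolate any vertex of $V_1'=V_1\setminus V_1^{I}$ (a $G[V_1]$-neighbour of $u\in V_1'$ automatically has $u$ as a $V_1$-neighbour, hence lies in $V_1'$) is exactly the point that legitimises invoking Ore's bound $\gamma(H)\leq |V(H)|/2$ for graphs $H$ without isolated vertices. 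This yields $\gamma(G)\leq |V_2|+\tfrac{1}{2}|V_1|$ and hence $2\gamma(G)\leq 2|V_2|+|V_1|=\gamma_{tR}(G)$. The degenerate cases are also harmless: if $V_2=\emptyset$ then $V_0=\emptyset$ and $V_1^{I}=\emptyset$, so $D=D_1$ still dominates $G$. This is essentially the standard argument from the cited source, and it is noticeably sharper than what a naive attempt gives (observing only that $V_1\cup V_2$ is a dominating set yields $2\gamma(G)\leq 2|V_1|+2|V_2|$, which is too weak); the extra idea of paying only half of $|V_1|$ via Ore is precisely what closes that gap.
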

In this part of the section, we recall the set cover problem and an approximation algorithm for the set cover problem given in \cite{cormen2022introduction}. We will use the approximation algorithm as a subroutine in our algorithms in Sections~\ref{sec:aproxTotal} and \ref{sec:aproxTotalRoman}. We start by stating the set cover problem, followed by the result related to the approximation algorithm.\\
\textbf{Minimum Set Cover (MSC) problem:}\\
The Minimum Set Cover (MSC) problem involves a universal set $U=\{u_1,u_2,\dots,u_n\}$ and a set of subsets $S=\{S_1,S_2,\dots,S_m\}$, where each $S_i$ is a subset of $U$ and $U$ is the union of all $S_i$. In essence, the MSC problem seeks to identify a subset $T\subseteq S$ of minimal size such that $U$ is covered by the union of all subsets in $T$. We denote an instance of the MSC problem as $<U,S>$, where $U$ represents a finite set known as the universal set, and $S$ constitutes a family of subsets of $U$.
\begin{theorem}  \label{th:cormen}\cite{cormen2022introduction}
    The MSC problem can be approximated with an approximation factor $H(max\{|S_i|:S_i\in S  \})$ using GreedySetCover(U,S), where $H(m)$ is the $m\operatorname{-}{th}$ harmonic number.
\end{theorem}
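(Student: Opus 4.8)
The plan is to prove Theorem~\ref{th:cormen} by a \emph{cost-distribution} (charging) argument, which is the standard route for analyzing greedy set cover. The idea is to assign to each element of $U$ a real-valued cost in such a way that (a) the total cost over all of $U$ equals the number of sets picked by GreedySetCover$(U,S)$, and (b) the cost accumulated \emph{within any single subset} $S_i\in S$ is bounded by the harmonic number $H(|S_i|)$. Concretely, whenever the greedy rule selects a set that covers $t$ elements not previously covered, I would charge each of those $t$ newly covered elements a cost of $\frac{1}{t}$. Since every element of $U$ is covered exactly once (at the moment it first enters the cover), summing these charges over all of $U$ reproduces exactly $|\mathcal{C}|$, the size of the greedy cover.

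Given the charge $c_x$ for each $x\in U$, the core of the proof is to bound $|\mathcal{C}|$ against the optimum. Let $\mathcal{C}^*$ be an optimal cover. Because $\mathcal{C}^*$ covers all of $U$, every element lies in at least one of its sets, so
\begin{equation*}
|\mathcal{C}| \;=\; \sum_{x\in U} c_x \;\leq\; \sum_{S_i\in \mathcal{C}^*}\sum_{x\in S_i} c_x \;\leq\; \sum_{S_i\in \mathcal{C}^*} H(|S_i|) \;\leq\; |\mathcal{C}^*|\cdot H\!\left(\max\{|S_i|:S_i\in S\}\right),
\end{equation*}
where the second inequality is precisely property (b) above, and the final step uses monotonicity of $H$. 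This chain immediately yields the claimed approximation factor, so the entire theorem reduces to establishing property (b).

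The main obstacle, therefore, is the per-set bound $\sum_{x\in S_i} c_x \leq H(|S_i|)$, and I would prove it as follows. Fix any $S_i\in S$ and let $r_j$ denote the number of elements of $S_i$ still uncovered just before the greedy algorithm makes its $j$-th selection, so that $r_0=|S_i|$ and $r_j=0$ once $S_i$ is fully covered. The crucial observation is that at the $j$-th step the set $S_i$ itself was an available candidate covering $r_{j-1}$ uncovered elements; since greedy picks a set covering the \emph{maximum} number of uncovered elements, the chosen set covers at least $r_{j-1}$ new elements, so each element it newly covers is charged at most $\frac{1}{r_{j-1}}$. Consequently the elements of $S_i$ covered at step $j$ contribute at most $\frac{r_{j-1}-r_j}{r_{j-1}}$ to $\sum_{x\in S_i}c_x$. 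Summing over $j$ and bounding each term by $\frac{r_{j-1}-r_j}{r_{j-1}} \leq \sum_{\ell=r_j+1}^{r_{j-1}} \frac{1}{\ell}$ (valid because $\frac{1}{r_{j-1}}\leq \frac{1}{\ell}$ for $\ell\leq r_{j-1}$) makes the sum telescope into $\sum_{\ell=1}^{|S_i|}\frac{1}{\ell}=H(|S_i|)$. This harmonic telescoping step is the delicate part of the argument; once it is in place, property (b) holds and the displayed chain above completes the proof.
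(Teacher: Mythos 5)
Your argument is correct and is essentially the standard cost-charging proof from Cormen et al., which the paper does not reprove but simply cites; your per-set harmonic bound via the residual counts $r_j$ and the telescoping $\frac{r_{j-1}-r_j}{r_{j-1}}\leq\sum_{\ell=r_j+1}^{r_{j-1}}\frac{1}{\ell}$ is exactly the textbook route. The only blemish is a harmless off-by-one in the verbal definition of $r_j$ (you describe it as the count just before the $j$-th selection but then consistently use $r_{j-1}$ for that quantity); the mathematics is unaffected.
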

\section{NP-completeness result}\label{sec:NPC}
In this section, our focus is on determining the complexity of the Total Roman Dominating Set (TRDS) problem. We aim to show that the decision version of the TRDS problem belongs to the NP-Complete class, especially when the graph is restricted to unit disk graphs. To accomplish this, we do a reduction from the decision version of the Dominating Set (DS) problem in grid graphs (GGs) to the decision version of the TRDS problem in unit disk graphs (UDGs). Formally, the respective decision versions are defined as follows:\\
\textbf{The decision version of the DS problem in grid graphs (D-DS-GGs)}: Given a nonzero positive integer $k$ and a grid graph $G$, does $G$ has a DS of size at most $k$?\\
\textbf{The decision version of the TRDS problem in UDGs (D-TRDS-UDGs)}: Given a nonzero positive integer $k'$ and a UDG $G$, does $G$ has a TRDS of weight at most $k'$?
\par In $1990$, Clark et al. \cite{clark1990unit} showed that \textbf{D-DS-GGs} belongs to the class NP-complete by doing a reduction from the known NP-complete problem \textbf{Planar Dominating Set of maximum degree $3$}. We prove the hardness result of \textbf{D-TRDS-UDGs} by making a polynomial time reduction from an arbitrary instance of \textbf{D-DS-GGs} to an instance of \textbf{D-TRDS-UDGs}.
\begin{lemma}\label{lem:DS_to_TRDS}
    If $G=(V,E)$ is an arbitrary instance of \textbf{D-DS-GGs} without any isolated vertex, then an instance $G'=(V',E')$ of \textbf{D-TRDS-UDGs} can be constructed from $G$ in polynomial time. 
\end{lemma}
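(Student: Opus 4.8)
The plan is to exhibit an explicit geometric construction that replaces the grid graph $G$ by a unit disk graph $G'$, and then to check the two things the statement actually requires: that $G'$ is a bona fide UDG realizing exactly the intended adjacencies, and that it has polynomially many vertices (the weight correspondence between a dominating set of $G$ and a total Roman dominating function of $G'$ is left to the subsequent argument). I would start from the observation that a grid graph is almost a UDG already: its vertices sit at integer coordinates and $v_iv_j\in E$ exactly when the two points are at $L_1$-distance $1$. The only obstruction is that there is no room around an integer point to attach the gadgets that encode the total and Roman constraints, so I would first dilate the embedding by a fixed integer factor $c$, placing each $v_i$ at $p_i=(cx_i,cy_i)$; this keeps the images of distinct grid vertices at distance at least $c$.

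Around each $p_i$ I would then place its edge \emph{ports} (at most one per incident grid edge, so at most four) together with a single \emph{pendant} point, all lying on the unit circle centred at $p_i$ but at distinct angular positions, so that each of them is adjacent to $p_i$ while being pairwise more than $1$ apart. This is possible precisely because at most five points can surround $p_i$ in this fashion, which is exactly Lemma~\ref{lem:5disks}, and a grid vertex satisfies $\deg_G(v_i)+1\le 5$; the pendant both forces $v_i$ to be covered (so it is either selected or paid for locally) and supplies the neighbour needed for the total property. Each edge $v_iv_j$ is then represented by a path of unit-distance \emph{link} points joining the corresponding ports of $p_i$ and $p_j$, where the number of link points is a fixed $O(1)$ quantity determined by the weight bookkeeping; the hypothesis that $G$ has no isolated vertex guarantees every $p_i$ carries at least one port, so $G'$ itself has no isolated vertex.

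The step I expect to be the main obstacle is certifying that $G'$ is exactly the intended unit disk graph. All intended adjacencies — a port or pendant with its own $p_i$, and consecutive link points of one path — hold at distance at most $1$ by construction; the delicate direction is that no unintended adjacency appears, i.e.\ ports and pendants of distinct vertices, link points of different paths, and port-versus-pendant pairs at the same vertex must all lie at distance strictly greater than $1$. The local separation is handled by the angular (pentagon-type) placement guaranteed by Lemma~\ref{lem:5disks}, and the global separation by taking the dilation factor $c$ large enough; verifying it reduces to a finite distance case-analysis over the $O(1)$ local configurations that can occur. Polynomial time is then immediate: a grid graph has maximum degree $4$, so $|E|=O(|V|)$, and since each vertex contributes $O(1)$ ports and one pendant while each edge contributes $O(1)$ link points, we obtain $|V'|=O(|V|)$ and $|E'|=O(|V|)$; all coordinates and adjacencies are computed in polynomial time, which establishes the lemma.
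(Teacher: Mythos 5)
Your construction is genuinely different from the paper's, and as written it has concrete gaps that prevent verification. First, the key quantitative choice --- ``the number of link points is a fixed $O(1)$ quantity determined by the weight bookkeeping'' --- is exactly the part that must be pinned down for the reduction to mean anything, and you leave it open; a subdivision-path gadget only works for domination-type reductions when the path length is fixed explicitly (typically modulo a small constant), and different choices give different, possibly broken, weight correspondences. Second, the port-plus-pendant placement does not work as stated for a degree-$4$ grid vertex. If the four ports sit in the directions of the four incident edges (angles $0^\circ,90^\circ,180^\circ,270^\circ$ on the unit circle), then every remaining point of that circle is within $45^\circ$ of some port, hence at distance at most $2\sin(22.5^\circ)\approx 0.77<1$ from it, so the pendant is forced to be adjacent to a port --- an adjacency your gadget does not intend. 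Avoiding this forces the five points to sit at roughly $72^\circ$ spacing, which detaches the ports from the edge directions and makes every link path bend near its endpoints; at that point the claim that only consecutive link points are within distance $1$ (and that link points of the up-to-four paths leaving the same vertex stay pairwise far apart) is precisely the ``finite case-analysis'' you defer, and it is not carried out. Third, you invoke Lemma~\ref{lem:5disks} backwards: it is an upper bound (at most five independent unit disks can share a point), not an existence statement that five pairwise-independent points fit on a unit circle; the latter is true but rests on the separate computation $2\sin(36^\circ)>1$.

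For contrast, the paper avoids all of this with a much lighter construction: it keeps every grid vertex at its original integer position, places a single point $x_{ij}$ on each grid edge at distance $0.5$ from an endpoint together with a pendant $y_{ij}$ at distance $0.1$ from $x_{ij}$ (offset perpendicular to the edge), and declares $pq\in E'$ iff $d(p,q)\le 0.5$. With that threshold the original grid vertices are no longer adjacent to one another, each $v_i$ is adjacent exactly to the added points on its incident edges, and the separation checks are one-line distance computations (e.g.\ $\sqrt{0.5^2+0.5^2}>0.5$). Polynomial time is then immediate with $|V'|=n+2m$. If you want to salvage your dilation-and-paths construction you must fix the path lengths explicitly and actually perform the separation analysis; as submitted, the proof is incomplete.
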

\begin{proof}
    Let $G$ be a grid graph with grid size $1\times 1$, where $V(G)=\{p_1,p_2,\dots,p_n\}$ and $E(G)=\{e_1,e_2,\dots,e_m\}$ are the vertex set and edge set of $G$, respectively. We construct a UDG $G'=(V',E')$ from $G$ in two phases as explained below:
\par In the first phase, for each vertex $p_i\in V(G)$, we add a counterpart vertex $v_i\in V(G')$, where the distance and orientation between any two vertices $v_i,v_j\in V(G')$ are exactly the same as their counterpart vertices $p_i,p_j\in V(G)$. For each edge $p_ip_j\in E(G)$, we address its counterpart $v_iv_j$ as a segment in $G'$. Let $N$ be the set of points added in this phase ($|N|=|V(G)|=n$).
\par In the second phase, we add some auxiliary points in $G'$. For each segment $v_iv_j$ in $G'$, we add one point $x_{ij}$ on the grid line of the segment $v_iv_j$ at a distance $0.5$ either from $v_i$ or from $v_j$ (length of $v_iv_j$ is 1 unit). If $p_ip_j$ is parallel to $x$-axis and the point $x_{ij}$ is at the coordinate $(x,y)$, then we place another point $y_{ij}$ at coordinate $(x,y+0.1)$ (similarly, if $p_ip_j$ is parallel to $y$-axis and the point $x_{ij}$ is at the coordinate $(x,y)$, then we place the point $y_{ij}$ at coordinate $(x+0.1,y)$). We refer to each such $x_{ij}y_{ij}$ as a pendant edge. Let $A$ be the set of auxiliary points added in this phase to $G'$, i.e., $|A|=2|E(G)|=2m$.
\par Now, the graph $G'=(V',E')$ with vertex set $V'=N\cup A$ and edge set $E'=\{pq : p,q\in V'\text{ and } d(p,q)\leq 0.5\}$ is a unit disk graph.
 \par From the preceding steps it is evident that the UDG $G'$ can be generated from $G$  within polynomial time. Fig.~\ref{fig:GG-to-UDG}(a) and  Fig.~\ref{fig:GG-to-UDG}(b) provide a comprehensive illustration of the reduction using an example.
 \end{proof}
 \begin{figure}[h!]
\centering
\begin{subfigure}[b]{0.3\textwidth}
\centering
     \includegraphics[scale=.56]{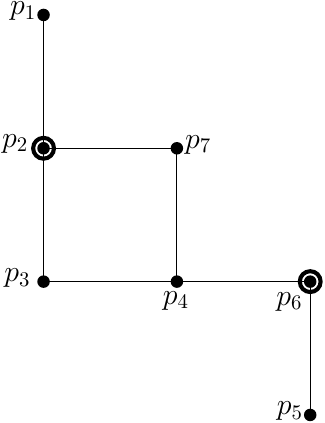} 
     \caption{$G=(V,E)$}
    \label{fig:GG-to-UDG-(a)}
\end{subfigure}
\begin{subfigure}[b]{0.3\textwidth}
\centering
    \includegraphics[scale=.56]{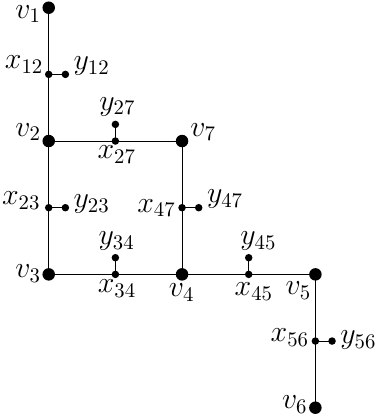} 
    \caption{$G'=(V',E')$}
    \label{fig:GG-to-UDG-(b)} 
\end{subfigure}
\begin{subfigure}[b]{0.3\textwidth}
\centering
    \includegraphics[scale=.56]{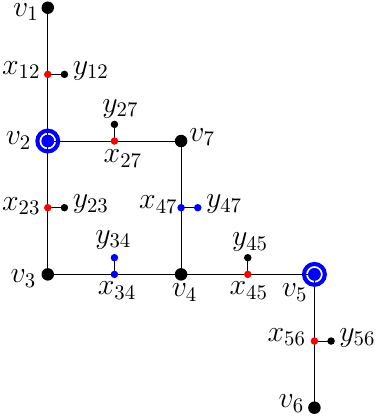} 
    \caption{$f=(V_0,V_1,V_2)$}
    \label{fig:UDG_12_disks2} 
\end{subfigure}
\caption{Illustration of $G$ and $G'$}
\label{fig:GG-to-UDG}
\end{figure}
 \begin{lemma}\label{lem:G'}
  Let $f=(V_0,V_1,V_2)$ be a TRDF of $G'$, such that $|V_1|$ attains minimum value. If $f(v_i)=1$, then for each $j$, $f(x_{ij})=2$, where $v_iv_j$ is a segment in $G'$. 
\end{lemma}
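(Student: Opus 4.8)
The plan is to exploit the very rigid local structure that the reduction of Lemma~\ref{lem:DS_to_TRDS} builds around each grid vertex, and then play it off against the minimality of $|V_1|$. First I would pin down the relevant neighbourhoods in $G'$. Since edges of $G'$ join points at distance at most $0.5$, the only neighbours of a counterpart vertex $v_i$ are the midpoints $x_{ij}$ of its incident segments (the other counterpart vertices $v_k$ sit at distance $1$, and every offset point $y$ at distance $\sqrt{0.26}>0.5$); the neighbours of a midpoint $x_{ij}$ are exactly $v_i$, $v_j$ and its companion $y_{ij}$; and crucially each $y_{ij}$ is a \emph{pendant}, i.e.\ its unique neighbour is $x_{ij}$. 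I would record these three adjacency facts first, since every later step relies on them.

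Second, I would show that $f(x_{ij})\geq 1$ for \emph{every} segment, independently of the value at $v_i$. This is forced by the pendant $y_{ij}$: if $f(y_{ij})=0$ the Roman property needs a neighbour of value $2$, which can only be $x_{ij}$; and if $f(y_{ij})\geq 1$ the total property needs a neighbour of $y_{ij}$ in $V_1\cup V_2$, again only $x_{ij}$. In all cases $f(x_{ij})\geq 1$, hence $f(x_{ij})\in\{1,2\}$.

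Third --- the heart of the argument --- I would assume $f(v_i)=1$ and rule out $f(x_{ij})=1$. Suppose $f(x_{ij})=1$. Then $x_{ij}\notin V_2$, so $y_{ij}$ cannot be Roman-dominated with value $0$, forcing $f(y_{ij})\in\{1,2\}$. If $f(y_{ij})=1$, then $v_i,x_{ij},y_{ij}$ is a three-vertex path in $G'[V_1\cup V_2]$ all of whose vertices carry value $1$, contradicting Observation~\ref{obs:minimum_V1}. If instead $f(y_{ij})=2$, I would exhibit a strictly better function by resetting $f(x_{ij}):=2$ and $f(y_{ij}):=0$: this keeps $f$ a valid TRDF --- $y_{ij}$ is now Roman-dominated by $x_{ij}$, the total property of $x_{ij}$ and of $v_i$ is witnessed by $v_i$ and by $x_{ij}$ respectively, and no other vertex ever relied on $y_{ij}$ since it is a pendant --- while removing $x_{ij}$ from $V_1$, so $|V_1|$ strictly decreases and the weight does not increase. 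This contradicts the minimality of $|V_1|$. Hence $f(x_{ij})\neq 1$, and together with $f(x_{ij})\geq 1$ this yields $f(x_{ij})=2$ for every $j$.

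The main obstacle is the subcase $f(x_{ij})=1$ with $f(y_{ij})=2$: here there is no direct Roman- or total-property violation to cite, so the contradiction must come from minimality via the local relabelling above, and I would have to verify carefully that this relabelling disturbs no other vertex --- which is exactly where the pendant nature of $y_{ij}$ and the explicitly computed neighbourhoods do the work. The remaining subcases reduce cleanly either to the Roman property of the degree-one vertex $y_{ij}$ or to Observation~\ref{obs:minimum_V1}.
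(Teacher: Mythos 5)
Your proof is correct and follows the same strategy as the paper's: assume $f(x_{ij})=1$ for some incident segment and derive a contradiction with the minimality of $|V_1|$ via a local relabelling around the pendant edge $x_{ij}y_{ij}$. It is in fact more complete than the paper's version, which simply asserts ``if so, then $f(y_{ij})=1$'' without justifying $f(y_{ij})\neq 0$ (your pendant/Roman-property argument) and without treating the subcase $f(y_{ij})=2$ (which you dispose of with the weight-non-increasing relabelling $f(x_{ij}):=2$, $f(y_{ij}):=0$).
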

 \begin{proof}
   On the contrary, assume  there exists a segment $v_iv_j$ such that $f(v_i)=1$ and $f(x_{ij})=1$. If so, then $f(y_{ij})=1$. We can always have another TRDF $f'$ such that $f'(v_i)=1$, $f'(x_{ij})=2$ and for all other vertices $f(v)=f'(v)$.  This leads to a contradiction that $|V_1|$ attains minimum value in $f$.
   \end{proof}
  \begin{theorem}
     $\operatorname{D-TRDS-UDGs}$ is NP-complete.
 \end{theorem}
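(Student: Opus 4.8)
The plan is to establish the two halves of NP-completeness separately: membership in NP, and NP-hardness via the reduction already set up in Lemmas~\ref{lem:DS_to_TRDS} and \ref{lem:G'}. Membership is routine: given a UDG $G'$ and a candidate $f:V'\to\{0,1,2\}$, one checks in time polynomial in $|V'|$ that $W(f)\le k'$, that every vertex with $f$-value $0$ has a neighbor of $f$-value $2$, and that $G'[V_1\cup V_2]$ has no isolated vertex. Thus $f$ is a polynomial-size certificate and \textbf{D-TRDS-UDGs} $\in$ NP. For hardness I would reduce from \textbf{D-DS-GGs}, taking the UDG $G'$ produced from the grid graph $G$ in Lemma~\ref{lem:DS_to_TRDS} and setting the target weight to $k'=k+2|E(G)|$. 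The engine of the whole argument is the exact identity $\gamma_{tR}(G')=2|E(G)|+\gamma(G)$, from which the decision equivalence ``$G$ has a dominating set of size at most $k$ iff $G'$ has a TRDF of weight at most $k'$'' is immediate.

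The gadget bookkeeping makes the identity plausible. Write $m=|E(G)|$. Each pendant $y_{ij}$ has $x_{ij}$ as its only neighbor, so both the Roman and total conditions force $f(x_{ij})\ge 1$, and moreover $f(x_{ij})+f(y_{ij})\ge 2$ on every segment; summing over the $m$ segments already contributes weight at least $2m$. Hence for any TRDF the excess $W(f)-2m$ equals $\sum_i f(v_i)+\sum_{ij} s_{ij}$, where $s_{ij}=f(x_{ij})+f(y_{ij})-2\ge 0$ is the surplus of a gadget, and this excess is exactly what I would match against $\gamma(G)$.

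For the forward direction, given a dominating set $D$ of $G$ I would set $f(v_i)=1$ precisely for $p_i\in D$; for each $p_i\notin D$ I fix one adjacent dominator $p_j\in D$ and set $f(x_{ij})=2,\ f(y_{ij})=0$, while on every remaining segment I set $f(x_{ij})=f(y_{ij})=1$. Checking the Roman condition for each value-$0$ grid vertex (it sees a value-$2$ midpoint) and the total condition for each positive vertex (midpoints see their pendant or a dominator, pendants see their midpoint) shows $f$ is a TRDF of weight exactly $2m+|D|$, so $\gamma_{tR}(G')\le 2m+\gamma(G)$. For the reverse direction I would begin from a minimum-weight TRDF $f$ and apply a weight-preserving normalization (moving a wasted pendant value onto an incident grid vertex) so that no segment simultaneously has $f(x_{ij})=2$, $f(y_{ij})\ge 1$ and $f(v_i)=f(v_j)=0$. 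Setting $D=\{p_i:f(v_i)\ge 1\}$, its size is at most $\sum_i f(v_i)\le W(f)-2m$. To see $D$ dominates $G$: any $p_i$ with $f(v_i)=0$ has, by the Roman condition, an incident midpoint with $f(x_{ij})=2$, and the total condition requires a positive neighbor of $x_{ij}$; after normalization the case $v_i=v_j=0$ would force $f(y_{ij})=0$ and leave $x_{ij}$ with no positive neighbor, a contradiction, so $f(v_j)\ge 1$, i.e.\ $p_j\in D$ with $p_ip_j\in E(G)$. Thus $\gamma(G)\le W(f)-2m$, which together with the forward bound yields the identity.

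The main obstacle is precisely this last step: controlling the configurations in which a value-$2$ midpoint meets its total-domination requirement through its pendant rather than through a grid vertex, since such configurations decouple the $2m$ gadget cost from $\gamma(G)$. This is where Lemma~\ref{lem:G'} and the normalization are essential, and the delicate point is to arrange the charging so that each gadget's surplus is consumed at most once, in particular when both endpoints of a segment carry value $0$.
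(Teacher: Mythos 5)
Your proposal is correct and follows the same skeleton as the paper: NP membership by direct verification, and hardness via the Lemma~\ref{lem:DS_to_TRDS} construction with threshold $k'=k+2m$, hinging on the identity $\gamma_{tR}(G')=\gamma(G)+2m$ and on the observation that each pendant gadget forces weight at least $2$. The two proofs diverge in how they execute the sufficiency direction. The paper starts from a TRDF with minimum $|V_1|$, invokes Lemma~\ref{lem:G'} to control segments incident to value-$1$ grid vertices, and extracts $D=S_1\cup S_2$, where $S_2$ picks one endpoint of each segment whose midpoint has value $2$ and whose pendant has value $1$; the size bound $|D|\le k$ is then argued by asserting that the gadgets absorb weight $2m$. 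You instead apply an explicit weight-preserving normalization that eliminates the configuration $f(x_{ij})=2$, $f(y_{ij})\ge 1$, $f(v_i)=f(v_j)=0$ (shifting the pendant's surplus onto a grid endpoint), after which $D=\{p_i: f(v_i)\ge 1\}$ suffices and the bound $|D|\le\sum_i f(v_i)\le W(f)-2m$ falls out of the surplus decomposition $W(f)-2m=\sum_i f(v_i)+\sum_{ij}s_{ij}$. Your version buys a cleaner charging argument: each gadget's surplus is accounted for exactly once, the extracted set $D$ has an unambiguous definition, and you do not need Lemma~\ref{lem:G'} or the minimum-$|V_1|$ hypothesis at all (only weight preservation), which makes the ``at most $k$'' step more transparent than the paper's. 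Your forward assignment (value $2$ on one chosen dominator midpoint per non-dominated vertex, $(1,1)$ on all remaining gadgets) also differs cosmetically from the paper's (value $2$ on every midpoint adjacent to a dominator), but both yield weight exactly $2m+|D|$. One point to make explicit when writing this up: the normalization must be checked to preserve both the Roman and total conditions and to terminate (it does, since raising a grid vertex from $0$ to $1$ cannot create a new bad configuration on any other segment), and the claim that $y_{ij}$'s unique neighbor is $x_{ij}$ and that $v_i$'s only neighbors are incident midpoints should be verified from the $0.5$ edge threshold of the construction.
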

 \begin{proof}
     Given a positive integer $k$ and an ordered partition $f=(V_0,V_1,V_2)$ of $V(G)$, we can verify whether $f$ is a TRDF of $G$ and $W(f)\leq k$ or not in polynomial time. Hence, $\operatorname{D-TRDS-UDGs}\in NP$.
     \par To prove $\operatorname{D-TRDS-UDGs}$ is $\operatorname{NP-hard}$, we use a polynomial time reduction from  $\operatorname{D-DS-GGs}$ to $\operatorname{D-TRDS-UDGs}$. First, we construct an instance of $\operatorname{D-TRDS-UDGs}$ from an arbitrary instance of $\operatorname{D-DS-GGs}$ (i.e., from $G=(V,E)$ to $G'=(V',E')$) using Lemma~\ref{lem:DS_to_TRDS}. Fig.~\ref{fig:GG-to-UDG} gives a complete illustration of the reduction through an example. To complete the hardness result, we prove the following claim.
     \par \textbf{Claim: }$G$ has a dominating set $D$ of size at most $k$, if and only if $G'$ has a total Roman dominating function $f=(V_0,V_1,V_2)$ of weight at most $k+2m$.\\
     \textbf{Necessity:} To prove this, we consider the TRDF, which attains minimum $|V_1|$; otherwise, for a specific DS of $G$, there may exist multiple TRDFs for $G'$. Let $D=\{p_i:p_i\in V(G)\}$ be a dominating set of the grid graph $G$ such that $|D|\leq k$. 
     \par Now, we will assign the Roman value to each vertex $v\in V'$ as listed below:\\
     $(i)$ if $p_i\in D$, then $f(v_i)=1$, $(ii)$ if $p_ip_j\in E(G)$  and $p_i \text{(and/or } p_j\text{)} \in D$, then $f(x_{ij})=2$, $(iii)$ if $p_ip_j\in E(G)$ and $p_i,p_j\notin D$, then $f(x_{ij})=f(y_{ij})=1$, and $(iv)$ the remaining vertices in $G'$ carry Roman value $0$.

    Let $(V_0,V_1,V_2)$ be the ordered partition of $V'$ such that  $V_i=\{v\in V':f(v)=i\}$. Now, we argue that $f=(V_0,V_1,V_2)$  is a TRDF.
 For each edge $p_ip_j\in E(G)$, there exists a counterpart segment $v_iv_j$ in $G'$. Each segment $v_iv_j$ is associated with an edge $x_{ij}y_{ij}$, where $y_{ij}$ is a pendant vertex as shown in Fig.~\ref{fig:GG-to-UDG}(b). So to Roman dominate the vertices of the edge $x_{ij}y_{ij}$, the required Roman value is exactly $2$ (either $f(x_{ij})=2$ or $f(x_{ij})=f(y_{ij})=1$). If $p_i\in D$, then it dominates each $p_j$ in $G$, where $p_ip_j\in E(G)$. In contrast, in  $G'$, a Roman value $2$ to $x_{ij}$ ensures the Roman domination of $v_i$, $v_j$ and $y_{ij}$, and a Roman value $1$ to $v_i$ ensures the total Roman domination of $v_i$, $v_j$ and $y_{ij}$. In each edge $p_ip_j\in E(G)$, if $p_i,p_j\notin D$,  the vertices $p_i$ and $p_j$ are dominated by some other vertices, say $p_k$ and $p_{\ell}$, respectively. Therefore, the corresponding counterpart vertices $v_i,v_j$ in $G'$ are Roman dominated by the vertices $x_{ik}$ and $x_{j\ell}$ (i.e., $f(x_{ik})=2$ and $f(x_{j\ell})=2$) and can be total Roman dominated by assigning Roman value $1$ to $v_{k}$ and $v_{l}$ (i.e., $f(v_{k})=1$ and $f(v_{\ell})=1$). Hence, the edge $x_{ij}y_{ij}$ associated with the segment $v_iv_j$ needs a Roman value $2$ (i.e., $f(x_{ij})=1$ and $f(y_{ij})=1$) for total Roman domination. Therefore, each segment requires a Roman value of weight $2$ for Roman domination. Since there is $|E(G)|$ number of edges in $G$, which is equal to the number of segments in $G'$, Roman domination of $G'$ requires a weight of at least $2|E(G)|$, and total roman domination requires an additional weight with value $|D|$. Hence $W(f)=|D|+2|E(G)|\leq k+2m$, where $W(f)$ is the weight associated with the TRDF of $G'$.\\
 \textbf{Sufficiency: }Let $f=(V_0,V_1,V_2)$ be a TRDF of $G'$ of weight $W(f)\leq k+2m$. We prove that $G$ has a dominating set $D$ such that $|D|\leq k$.
 \par Given a TRDF $f=(V_0,V_1,V_2)$ of $G'$ with minimum $|V_1|$, we construct a set $D$ as follows: if $v_i\in V_1$, then $p_i$ is in  $D$ and if $x_{ij}\in V_2$ and $y_{ij}\in V_1$, then $p_j$ is in $D$.
 Now, we show that the set $D$ is a dominating set of $G$. If $f(v_i)=1$, then for each $j$, $f(x_{ij})=2$, where $v_iv_j$ is a segment in $G'$ (see Lemma~\ref{lem:G'}). Since $f(x_{ij})=2$ ensures the Roman domination of each $v_j$ in $G'$, inclusion of $p_i$ in $D$ ensures the domination of each $p_j$ in $G$, where $p_ip_j\in E$. Let $S_1$  be the set containing such $p_i$s. Next, we have to show that for each $v_i\in V'$, if $f(v_i)=0$, then there exists an edge $p_ip_k\in E$ such that $p_k\in D$.
Since $f$ is a TRDF, if $f(v_i)=0$, then in order to satisfy the \textit{Roman} property, there exists a point $x_{ik}\in V'$ such that $f(x_{ik})=2$. In addition to this, it must have satisfied the \textit{total} property, therefore, either $f(y_{ik})=1$ or $f(v_k)=1$. Whatever may be the case, as per the construction of $D$, $p_k\in D$.
\par In $G'$,  for any segment $v_iv_j$, if $f(v_i)=f(v_j)=0$, then $f(x_{ij})=2$ and $f(y_{ij})=1$, as $x_{ij}$ ensures the Roman domination of $v_i$, $v_j$ and $x_{ij}$. However, in $G$ to dominate $p_i$ and $p_j$, the inclusion of exactly one vertex from $p_i$ and $p_j$ in $D$ is sufficient.  So as per the construction of $D$, $p_j$  ensures the domination of $p_i$ and $p_j$, where $p_ip_j\in E(G)$. Let $S_2$ be the set of such $p_j$'s.
\par From the above two paragraphs, we conclude that 
$D=S_1\cup S_2$ is a dominating set of $G$. Now, we are left to show that $|D|\leq k$. 
\par Since each pendant edge corresponding to each segment in $G'$ requires at least a weight of value $2$ for Roman domination, the associated weight for total Roman domination in $G'$ is at least $2|E(G)|=2m$ (Since the number of segments in $G'$ is exactly equal to the number of edges in $G$). Since $W(f)\leq k+2m$, $|D|=|S_1\cup S_2|\leq k$. Therefore, $\operatorname{D-TRDS-UDGs}\in NP\operatorname{-}hard$.
\par Since $D\operatorname{-}TRDS\operatorname{-}UDGs\in NP$ and $D\operatorname{-}TRDS\operatorname{-}UDGs\in NP\operatorname{-}hard$, $D\operatorname{-}TRDS\operatorname{-}UDGs\in NP\operatorname{-}complete$.
\par To get a complete illustration of mapping from DS to TRDF, refer to Fig.~\ref{fig:GG-to-UDG}. If $D=\{p_2,p_5\}$ is a dominating set of the example graph $G$, then $f$ is the corresponding TRDF on $G'$. In Fig.~\ref{fig:GG-to-UDG}(a), the double circle black vertices represent the set $D$ of $G$, and in Fig.~\ref{fig:GG-to-UDG}(c), the red, blue, and black vertices of the graph $G'$ represent the sets $V_2$, $V_1$ and $V_0$ of the TRDF $f$, respectively.
\end{proof}
\section{A $7.17$ factor approximation algorithm for TDS problem} \label{sec:aproxTotal}
In this section, we propose a $7.17\operatorname{-}$ factor approximation algorithm for the TDS problem in geometric UDGs.
\subsection{Algorithm}\label{subsec:algorithm}
Given a geometric unit disk graph $G=(V,E)$ with $V=\{p_1,p_2,\dots,p_n\}\subseteq \mathbb{R}^2$ as the set of disk centers, Algorithm~\ref{alg1:TDS_UDG_SC} finds a TDS $D_t$ of $G$. Now, we describe the procedure for finding the set $D_t$. First, the algorithm finds a maximal independent set $D\subseteq V$ of $G$ to satisfy the domination property (see Line~\ref{alg1:begin_loop1_find_dom}-\ref{alg1:end_loop1_find_dom} of Algorithm~\ref{alg1:TDS_UDG_SC}). Next, to satisfy the total property, the algorithm chooses a set of neighboring vertices $T\subseteq V$ such that for each $v\in D$, there exists a vertex $u\in V\setminus D$ and $u\in N_G(v)$. To find the set $T$, the algorithm constructs a set cover instance $<D,S>$ out of the sets $D$ and $V\setminus D$. If $V\setminus D=\{u_1,u_2,\dots, u_{|V\setminus D|}\}$, then for each $u_i\in V\setminus D$, it finds the neighbors of $u_i$ in $D$. Since $D$ is a maximal independent set of $G$, $|N_G(u_i)\cap D|\leq 5$ (due to Lemma~\ref{lem:5disks}). Let $S_i$ represents the corresponding neighbors of $u_i$ in $D$ and $S$ be the collection of such $S_i$, where $1\leq i\leq (|V\setminus D|)$ (see Line~\ref{alg1:begin_loop2_find_Si}-\ref{alg1:end_loop2_find_Si} of Algorithm~\ref{alg1:TDS_UDG_SC}). Now, the set $D$ is the universal set, $S=\{S_i:1\leq i\leq |V\setminus D|\}$ is the set of subsets, and the pair $<D,S>$ is an instance of the set cover problem. A complete illustration of the scenario is shown through an example in Fig.~\ref{fig:Figure1_2}(a), in which the edges between the set $D$ and $V\setminus D$ are only shown. Next, the algorithm calls a procedure called $GreedySetCover(D,S)$ to find the cover of the set $D$. Let 
$S'$ be the set returned by the procedure (see Line~\ref{alg1:setcover} of Algorithm~\ref{alg1:TDS_UDG_SC}). For each $S_i\in S'$, $T$ contains the corresponding $u_i$ of $V\setminus D$ (see Line~\ref{alg1:Set T} of Algorithm~\ref{alg1:TDS_UDG_SC}). Then it reports $D_t=D\cup T$ as a TDS of $G$. Lemma~$\ref{lem:correctness1}$  and Lemma~\ref{lem:TimeComplexity1} represent the algorithm's correctness and time complexity, respectively.
	\begin{algorithm}[h!]
	\small
		\noindent
		\textbf{Input: }{A unit disk graph, $G=(V,E)$, with known disk centers}  \\
		\textbf{Output: }{A TDS $D_t$ for $G$}
		\begin{algorithmic}[1] 
			\State $V^{\prime}=V,D=\emptyset,S=\emptyset$
			\While{$V^{\prime}\neq \emptyset$}  \Comment{\textit{domination} property of TDS}\label{alg1:begin_loop1_find_dom} 
		    \State choose a vertex $v\in V^{\prime}$ 
		    \State $D=D\cup \{v\}$ 
		    \State $V^{\prime}=V^{\prime}\setminus N_G[v]$ 
		    \EndWhile   \label{alg1:end_loop1_find_dom}
			\State $i=1$
			\For{each $u\in V\setminus D$}  \label{alg1:begin_loop2_find_Si} 
		    \State $S_i=N_G(u)\cap D$
		    \State $i=i+1$
                \State $S=S\cup S_i$
		    \EndFor \label{alg1:end_loop2_find_Si}
		    \State $S'=GreedySetCover(D,S)$ \label{alg1:setcover}
            \State $T=\{u_i|S_i\in S'\}$  \Comment{\textit{total} property of TDS}\label{alg1:Set T}
		    \State $D_t=D\cup T$ 
		    \State \Return $|D_t|$ \label{alg1:return_f}
			\caption{\textit{TDS-UDG-SC($G$)}}
			\label{alg1:TDS_UDG_SC}
		\end{algorithmic}
	\end{algorithm}
\begin{figure}[h!]
\centering
\begin{subfigure}[b]{0.48\textwidth}
\centering
     \includegraphics[scale=.55]{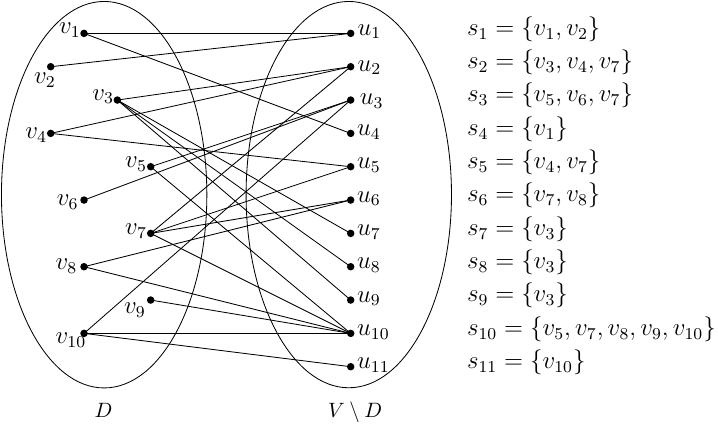} 
    \caption{}
    \label{fig:Figure1}
\end{subfigure}
\begin{subfigure}[b]{0.48\textwidth}
\centering
    \includegraphics[scale=.55]{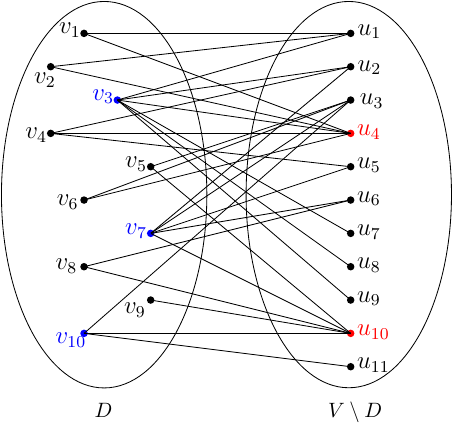} 
    \caption{}
    \label{fig:Figure2} 
\end{subfigure}
\caption{Illustration of $D$ and $V\setminus D$}
\label{fig:Figure1_2}
\end{figure}
\begin{lemma}\label{lem:D_VminusD}
Given a maximal independent set $D$ of a geometric unit disk graph $G=(V,E)$ such that $S=\{S_i\}$, where $S_i=D\cap N_G(u_i)$, for each $u_i\in V\setminus D$. If $C^*$ is an optimal set cover of the set cover instance $<D,S>$ and $D^*$ is an optimal dominating set of $G$, then $|C^*|\leq |D^*|$. 
\end{lemma}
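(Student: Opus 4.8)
The plan is to prove the inequality by exhibiting an explicit set cover $C$ of the instance $\langle D, S\rangle$ with $|C|\le |D^*|$; since $C^*$ is by definition a \emph{minimum} set cover, this immediately gives $|C^*|\le |C|\le |D^*|$. The starting point is to reinterpret what a set cover means here: a subfamily $\{S_i\}$ covers the universe $D$ exactly when the associated vertices $\{u_i\}\subseteq V\setminus D$ dominate every element of $D$ from the outside, i.e.\ each $d\in D$ is adjacent to at least one chosen $u_i$. So the task reduces to converting the dominating set $D^*$ of all of $G$ into a family of external dominators of $D$ without increasing the cardinality.

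First I would fix, for each $d\in D$, a witness $w_d\in D^*$ with $w_d\in N_G[d]$ (such a witness exists because $D^*$ dominates $G$). Then I would split $D^*$ into the two parts $D^*\setminus D$ and $D^*\cap D$ and treat them separately. The crucial structural observation, which drives the whole argument, is that $D$ is an independent set: consequently no vertex $w\in D^*\cap D$ can be adjacent to any \emph{other} vertex of $D$, so within $D$ such a $w$ dominates only itself. In particular, if the witness $w_d$ lies in $D$, then necessarily $w_d=d$, forcing $d\in D^*$.

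Next I would build the cover. For every $w\in D^*\setminus D$ we have $w=u_i$ for some $i$, and every $d\in D$ with $w_d=w$ satisfies $d\in S_i$; hence the single set $S_i$ covers all such $d$. For every $d'\in D^*\cap D$, I would invoke the standing assumption that $G$ has no isolated vertex together with independence of $D$ to find a neighbor $u_i\in V\setminus D$ of $d'$, so that $d'\in S_i$, and use that one set to cover $d'$. Taking $C$ to be the collection of all sets chosen in these two steps, the analysis of the witnesses shows every $d\in D$ is covered, while the construction charges at most one set per vertex of $D^*$, giving $|C|\le|D^*\setminus D|+|D^*\cap D|=|D^*|$.

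The main obstacle --- and the only genuinely nontrivial point --- is the treatment of vertices of $D^*$ that happen to lie in $D$: such vertices contribute to $|D^*|$ but, because $D$ is independent, cannot help dominate any other element of $D$ and cannot themselves serve as covering sets (covering sets come only from $V\setminus D$). The resolution is exactly the pairing above: each such $d'$ is charged its own external covering set, and the no-isolated-vertex hypothesis guarantees one exists, so the bookkeeping still yields at most $|D^*|$ sets in total. Everything else is routine verification that the constructed family is indeed a set cover.
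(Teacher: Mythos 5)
Your argument is correct and is in substance the same as the paper's: the paper's general case likewise splits $D^*$ into $D^*\cap D$ and $D^*\setminus D$, uses the independence of $D$ to conclude that the internal part dominates nothing else in $D$ and must be charged one external covering set each (available by the no-isolated-vertex hypothesis), while each vertex of $D^*\setminus D$ supplies one covering set, giving at most $|D^*|$ sets in total. Your single unified construction with an explicit witness function is simply a cleaner presentation of what the paper carries out via a three-way case analysis.
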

\begin{proof}  
Since $D$ is a maximal independent set of the unit disk graph $G$, for any vertex $v\in V\setminus D$, $|N_G(v)\cap D|\leq 5$ (Due to Lemma~\ref{lem:5disks}). If $D^*$ is an optimal dominating set of $G$, then either $(i)$ $D^*\subseteq D$ or $(ii)$ $D^*\subseteq V\setminus D$ or $(iii)$ $D^*= D_D^*\cup D_{V\setminus D}^*$, where $D_D^*= D^*\cap D$ and $D_{V\setminus D}^*= D^*\cap (V\setminus D)$. Now, we have to show that for each of the cases $|C^*|\leq |D^*|$. 
\par $(i)$ $D^*\subseteq D$: if $D^*\subseteq D$, then $D^*=D$; otherwise, there exists at least one vertex $v\in D$, such that $v$ is not dominated by $D^*$ (since $D$ is an independent set). This leads to a contradiction that $D^*$ is a dominating set. Since $D$ is an independent set, in the worst case,  $D$ requires at most $|D|$ number of subsets (i.e., vertices) from $S$ (i.e., $V\setminus D$)  to cover $D$. If $C^*$ is an optimal cover of the set cover instance $<D,S>$, then  $|C^*|\leq |D|=|D^*|$.
\par $(ii)$ $D^*\subseteq V\setminus D$: On the contrary, let us assume $C^*> D^*$. Since $D^*$ is a dominating set, each vertex in $D\cup V \setminus D$ (i.e., $V$) is either in $D^*$ or is adjacent to at least one vertex in $D^*$. If so, a set $D'\subseteq D^*$ exists such that $D'$ dominates $D$ (since $D$ is an independent set). This implies $|D'|$ number of subsets (i.e., vertices) from $S$ (i.e., $V\setminus D$) are sufficient to cover $D$, which is less than or equal to $|C^*|$. This leads to a contradiction to the fact that $C^*$ is an optimal cover of $D$.
\par $(iii)$ $D^*= D_D^*\cup D_{V\setminus D}^*$, where $D_D^*= D^*\cap D$ and $D_{V\setminus D}^*= D^*\cap (V\setminus D)$ (refer to Fig.~\ref{fig:Figure1_2}(b) for a pictorial representation): Since $D_D^*\subseteq D$, $D_D^*$ is an independent set. In the worst case, at most $|D_{ D}^*|$  number of subsets (vertices) from $S$ are required to cover vertices in $D_D^*$. Since $D_{V\setminus D}^*$ dominates the set $D\setminus D_D^*$, $|D_{V\setminus D}^*|$ number of subsets (vertices) are sufficient for the coverage of remaining vertices in $D\setminus D_D^*$. Therefore,  $|C^*|\leq |D_D^*|+|D_{V\setminus D}^*|=|D^*|$.
\end{proof}
\begin{lemma}\label{lem:correctness1}
The set $D_t$ in \textit{TDS-UDG-SC} is a TDS of $G$.
\end{lemma}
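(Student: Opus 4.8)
The plan is to verify the two defining properties of a total dominating set directly from the construction in Algorithm~\ref{alg1:TDS_UDG_SC}: the \emph{domination} property and the \emph{total} property. Recall that $D_t = D \cup T$, where $D$ is built in the first while-loop and $T$ is derived from the set cover $S'$ returned by $GreedySetCover(D,S)$.

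\par First I would establish the \emph{domination} property. The loop in Lines~\ref{alg1:begin_loop1_find_dom}--\ref{alg1:end_loop1_find_dom} greedily adds a vertex $v$ to $D$ and removes $N_G[v]$ from $V'$, terminating only when $V' = \emptyset$. I would argue that this produces a maximal independent set: every vertex is eventually removed, so every vertex of $V$ is either in $D$ or adjacent to some vertex of $D$, which is exactly the statement that $D$ is a dominating set of $G$. Since $D \subseteq D_t$ and adding vertices to a dominating set preserves domination, $D_t$ is a dominating set of $G$. This step is routine and follows immediately from the loop invariant that at each iteration the chosen $v$ is non-adjacent to all previously chosen vertices.

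\par Next I would verify the \emph{total} property, namely that $G[D_t]$ has no isolated vertex. This is where the construction of $T$ via set cover does its work. The key point is that $S' = GreedySetCover(D,S)$ is a valid cover of the universal set $D$, so every vertex $v \in D$ lies in some chosen subset $S_i \in S'$, i.e.\ $v \in N_G(u_i) \cap D$ for the corresponding $u_i \in T$. Hence each $v \in D$ has a neighbor $u_i \in T \subseteq D_t$, so no vertex of $D$ is isolated in $G[D_t]$. For the vertices of $T$: each $u_i \in T$ was included precisely because its associated subset $S_i \in S'$ is nonempty (a subset with $S_i = \emptyset$ would never be selected by the greedy procedure, as it covers nothing), and $S_i = N_G(u_i)\cap D$ nonempty means $u_i$ has a neighbor in $D \subseteq D_t$. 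Thus no vertex of $T$ is isolated either, and $G[D_t]$ is free of isolated vertices.

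\par The main obstacle, and the only place demanding care, is the claim that every $u_i \in T$ is genuinely adjacent to some vertex of $D$ rather than vacuously added; I would pin this down by observing that the cover $S'$ consists only of subsets that actually contribute to covering $D$, so each selected $S_i$ is nonempty, which translates directly into an edge from $u_i$ to $D$. Combining the domination property from the second paragraph with the no-isolated-vertex property from the third, I would conclude that $D_t$ satisfies both conditions in the definition of a TDS, completing the proof.
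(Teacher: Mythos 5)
Your proof is correct and follows essentially the same route as the paper: $D$ is a maximal independent (hence dominating) set, and the set cover $S'$ guarantees every vertex of $D$ has a neighbor in $T$. In fact you are slightly more thorough than the paper's own argument, which only checks that vertices of $D$ are non-isolated in $G[D_t]$ and leaves implicit your observation that each selected $S_i$ is nonempty, so the vertices of $T$ are non-isolated as well.
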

\begin{proof}
In the first phase, \textit{TDS-UDG-SC} finds a maximal independent set $D$ of $G$ to satisfy the domination property (see line~ \ref{alg1:begin_loop1_find_dom}-\ref{alg1:end_loop1_find_dom} in Algorithm~\ref{alg1:TDS_UDG_SC}). Next, the algorithm runs $GreedySetCover(D,S)$ to find a subset $T$ such that $T=\{u_i|S_i\in S'\}$ (see line~\ref{alg1:setcover} and line~\ref{alg1:Set T} in Algorithm~\ref{alg1:TDS_UDG_SC}). The set $T$ ensures that for each vertex $v\in D$, there exists a vertex $u\in T$ such that $uv\in E(G)$. Hence, the set $T$, when combined with $D$, confirms that none of the vertices in $D$ is isolated. Therefore, combinedly the nominated points in $D$ and $T$ satisfy the \textit{domination} and \textit{total} properties.  Hence, the set $D_t$ is a TDS of $G$.
\end{proof}
\begin{lemma}\label{lem:TimeComplexity1}
 \textit{TDS-UDG-SC} runs in $O(n\log{k})$ time, where $n$ is the input size and $k$ is the size of the total dominating set.
\end{lemma}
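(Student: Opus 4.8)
The plan is to charge the running time of \textit{TDS-UDG-SC} to its three phases separately: the construction of the maximal independent set $D$ (Lines~\ref{alg1:begin_loop1_find_dom}--\ref{alg1:end_loop1_find_dom}), the assembly of the set cover instance $<D,S>$ (Lines~\ref{alg1:begin_loop2_find_Si}--\ref{alg1:end_loop2_find_Si}), and the call to $GreedySetCover(D,S)$ (Line~\ref{alg1:setcover}). The crucial structural fact I would exploit is that $D$ is independent, so that the auxiliary search structure we maintain never holds more than $|D|=O(k)$ records; this is exactly what converts the naive $O(\log n)$ cost per operation into the claimed $O(\log k)$.

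First I would impose a grid on $\mathbb{R}^2$ with cells of side $1/\sqrt 2$, so that any two points lying in a common cell are at Euclidean distance at most $1$ and hence adjacent in $G$. Consequently each cell contains at most one vertex of $D$, and the disk $\Delta(p)$ of any point $p$ meets only a constant number of cells. I would store the non-empty cells (those holding a chosen $D$-vertex) in a balanced binary search tree keyed by integer cell coordinates; since $D$ is independent this tree holds at most $|D|=O(k)$ entries, so each insertion and lookup costs $O(\log k)$. Processing the while loop then amounts to scanning each $p\in V$ once, examining the $O(1)$ cells within distance $1$ of $p$'s cell to test whether $p$ is already dominated by some member of $D$, and, if not, inserting $p$. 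Each vertex incurs $O(1)$ tree operations, giving $O(n\log k)$ for the first phase.

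Next, for the set cover instance, I would reuse the same grid tree: for every $u\in V\setminus D$ the set $S_i=N_G(u)\cap D$ is recovered by examining the $O(1)$ cells meeting $\Delta(u)$ and collecting the $D$-vertices found there; by Lemma~\ref{lem:5disks} this yields $|S_i|\leq 5$. Hence each $u$ costs $O(\log k)$ and the whole phase costs $O(n\log k)$, while the total size of the instance satisfies $\sum_i |S_i|\leq 5|V\setminus D|=O(n)$. For the final phase I would implement $GreedySetCover(D,S)$ with universe $D$ of size $O(k)$ and sets of size at most $5$: maintaining the sets bucketed by their current number of uncovered elements (only six buckets are possible) lets each of the $O(\sum_i|S_i|)=O(n)$ element–set incidences be touched $O(1)$ times, so this phase runs in $O(n)$, and in any case within $O(n\log k)$. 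Summing the three phases yields the stated $O(n\log k)$ bound.

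The main obstacle I anticipate is justifying the $\log k$ rather than $\log n$ factor rigorously: this rests entirely on the claim that the search structure only ever stores $O(k)$ items, which in turn needs the independence of $D$ together with Lemma~\ref{lem:5disks} to cap both the per-cell occupancy and the number of relevant cells by constants. A secondary point to get right is the bookkeeping of the implicit deletion in the first phase—rather than physically removing $N_G[v]$ from $V'$, I would detect already-dominated vertices on the fly via the grid lookup—so that each vertex is handled exactly once and the $O(n\log k)$ accounting is honest.
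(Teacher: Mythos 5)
Your proposal is correct and follows essentially the same route as the paper: a constant-size grid whose non-empty cells are kept in a balanced search tree of $O(k)$ entries (the paper uses $1\times 1$ cells with at most $3$ independent points each and probes $9$ neighboring cells, where you use side $1/\sqrt{2}$ with at most one point per cell), giving $O(\log k)$ per vertex for both the maximal-independent-set phase and the construction of the sets $S_i$, plus a linear-time greedy set cover. Your bucketed implementation of $GreedySetCover$ merely fills in a detail the paper asserts without elaboration.
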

\begin{proof}
Let $V=\{p_1,p_2,\dots,p_n\}$ be the set of disks' centers corresponding to graph $G=(V,E)$. Let all the disks lie on a rectangular region $\mathbb{R}$ of a plane. Let the rectangle's extreme left and bottom arms represent the $x$- and $y$-axis, respectively. Then we split the plane $\mathbb{R}$ so that the region $\mathbb{R}$ becomes a grid with cell size $1\times 1$. Let $[x,y]$ be the index associated with each cell, where $x,y\in \mathbb{N} \cup \{0\}$. If a point $p\in V$ is located at co-ordinate $(p_x,p_y)$ on $\mathbb{R}$, then the point belongs to a cell with index $[\lfloor{p_x}\rfloor,\lfloor{p_y}\rfloor]$.
\par In phase \textit{one}, \textit{TDS-UDG-SC} constructs a maximal independent dominating set $D$ of the input graph $G$. To do so efficiently, each non-empty cell maintains a list that keeps the points of $V$ chosen for inclusion in $D$ located within that cell. While considering a point $p\in V$ as a candidate for the set $D$, it only probes into $9$ cells surrounding the cell where $p$ lies. That means if  $p$ is located at co-ordinate $(p_x,p_y)$, then it searches in each $[i,j]$ cell, where  $\lfloor{p_x}\rfloor - 1\leqslant i \leqslant \lfloor{p_x}\rfloor + 1$ and $\lfloor{p_y}\rfloor - 1\leqslant j \leqslant \lfloor{p_y}\rfloor + 1$.\footnote{Any point outside these $9$ cells is independent from $p$} If there does not exist any point $q\in D$ in those $9$ cells such that $p\in \Delta(q)$, then $p$ is included in $D$. A height balance binary tree containing non-empty cells is used to store the points that are in $D$. Since each cell of size $1\times 1$ can contain at most $3$ independent unit disks (Since the perimeter of a cell is $4$ unit, the number of independent unit disks that a cell can contain is at most $3$; otherwise, the disks are no longer independent), the processing time to decide whether a point is in $D$ or not requires $O(\log k)$ time, where $k=|D|$. Thus the time taken to process $|V|=n$ points is $O(n\log{k})$ (see line~ \ref{alg1:begin_loop1_find_dom}-\ref{alg1:end_loop1_find_dom} in Algorithm~\ref{alg1:TDS_UDG_SC}). Each $S_i$ can be found by investigating the surrounding $9$ cells of $u_i$. This requires $O(\log{k})$ time. Hence, the total time for the steps  \ref{alg1:begin_loop2_find_Si} to \ref{alg1:end_loop2_find_Si} in Algorithm~\ref{alg1:TDS_UDG_SC} requires $O(n\log{k})$ time.
\par In phase \textit{two}, it uses $GreedySetCover()$ to find a set $S'$, which contains at least one neighboring vertex for each element present in $D$. Since $|V\setminus D|\leq n$ and $|S_i|\leq 5$, where $1\leq i\leq |V\setminus D|$ (Since $G$ is a unit disk graph and the set $D$ is a maximal independent set, due to Lemma~\ref{lem:5disks}, each vertex in $V\setminus D$ can have at most $5$ neighbors in $D$). The algorithm for $GreedySetCover()$ can be implemented in $O(n)$ time. 
\par Therefore, the running time of the algorithm \textit{TDS-UDG-SC} is $O(n\log{k})$.
\end{proof}
\subsection{Analysis}
The set $D_t$ in \textit{TDS-UDG-SC} is a TDS of $G$, where $D_t=D\cup T$ (see Lemma~\ref{lem:correctness1}). Let $D^*$ and $D_t^*$ be an optimal DS and an optimal TDS of $G$, respectively. Since $D$ is a maximal independent set of $G$ and it is known from \cite{da2014efficient} that given the coordinates of each point in a graph $G$, the adjacency list of the graph can be found in polynomial time. Therefore, due to Lemma~$2$, the following claim is true.
\begin{equation}\label{eq-1}
  |D|\leq \frac{44}{9}|D^*|  
\end{equation}
The set $T$ in \textit{TDS-UDG-SC} satisfies the total property when added to the independent set $D$, where $T=\{u_i|S_i\in S'\}$ and  $S'$ is the set cover of the instance $<D,S>$. Let $C^*$ be an optimal set cover of $<D,S>$. Then from Lemma~\ref{lem:D_VminusD}, the following equation holds. 
\begin{equation} \label{eq-2}
    |C^*|\leq |D^*|
\end{equation} 
Therefore, from Lemma~\ref{lem:correctness1}, we conclude the following:
\begin{equation} \label{eq-3}
\begin{split}
|D_t|& =|D|+|T| \\
& \leq \frac{44}{9}|D^*|+H(5)\times |C^*| \text{ (refer to Equation~\ref{eq-1} and Theorem~\ref{th:cormen})}\\
& \leq \frac{43}{9}|D^*|+\frac{137}{60}\times |D^*| \text{ (refer to  Lemma~\ref{lem:D_VminusD})}\\
& = \frac{1291}{180} \times |D^*| \\
& \leq \frac{1291}{180} \times |D_t^*| \text{ (refer to  Observation~\ref{obs:D_T})}\\
& \approx 7.17 \times |D_t^*|
\end{split}
\end{equation}
\begin{theorem}
 The proposed algorithm (\textit{TDS-UDG-SC}) gives a $7.17\operatorname{-}$factor approximation result for the TDS problem in UDGs. The algorithm runs in $O(n\log{k})$ time, where $n$ is the input size and $k$ is the size of the independent set (i.e., $D$) in TDS problem.
\end{theorem}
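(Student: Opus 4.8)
The plan is to prove the theorem by assembling three facts established in the preceding lemmas: the set $D_t$ returned by \textit{TDS-UDG-SC} is a valid total dominating set, the algorithm runs in $O(n\log k)$ time, and $|D_t|$ lies within the claimed factor of the optimum. The first two are immediate citations: Lemma~\ref{lem:correctness1} guarantees that $D_t=D\cup T$ satisfies both the \emph{domination} and \emph{total} properties, and Lemma~\ref{lem:TimeComplexity1} supplies the $O(n\log k)$ bound through the grid-bucketing data structure (each $1\times1$ cell holding at most three independent disks, so membership queries for $D$ and the computation of each $S_i$ cost $O(\log k)$). Hence the only substantive work is the approximation ratio.

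For the ratio I would write $|D_t|=|D|+|T|$ and bound each summand against an optimal dominating set $D^*$. The first term is controlled by the maximal-independent-set guarantee: since a maximal independent set is a dominating set, Lemma~\ref{lem:sub-5} gives $|D|\le \tfrac{44}{9}|D^*|$. The second term is the output of $GreedySetCover(D,S)$; because $D$ is maximal independent, Lemma~\ref{lem:5disks} forces $|S_i|=|N_G(u_i)\cap D|\le 5$ for every subset, so Theorem~\ref{th:cormen} yields $|T|=|S'|\le H(5)\,|C^*|$, where $C^*$ is an optimal cover of the instance $<D,S>$ and $H(5)=1+\tfrac12+\tfrac13+\tfrac14+\tfrac15=\tfrac{137}{60}$.

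The key structural step is to translate the set-cover optimum back into a dominating-set optimum, which is exactly Lemma~\ref{lem:D_VminusD}: $|C^*|\le|D^*|$. Chaining the three bounds gives
\[
|D_t|\le \frac{44}{9}|D^*|+\frac{137}{60}|D^*|=\frac{1291}{180}|D^*|,
\]
and Observation~\ref{obs:D_T} ($\gamma(G)\le\gamma_t(G)$, i.e.\ $|D^*|\le|D_t^*|$) finally converts this into a bound against the TDS optimum, so $|D_t|\le\frac{1291}{180}|D_t^*|\approx 7.17\,|D_t^*|$.

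The part I expect to be most delicate is not the arithmetic but making sure the harmonic bound is applied against the correct baseline: $GreedySetCover$ only promises a factor $H(5)$ over the \emph{optimal cover} $C^*$, never directly over $D^*$, so the entire argument hinges on Lemma~\ref{lem:D_VminusD} legitimately comparing $|C^*|$ to $|D^*|$ (whose case analysis over $D^*\subseteq D$, $D^*\subseteq V\setminus D$, and the mixed case is the real engine of the bound). I would also flag one bookkeeping point: the clean constant $\tfrac{1291}{180}$ arises from adding $\tfrac{44}{9}$ and $\tfrac{137}{60}$ over the common denominator $180$, so the intermediate ``$\tfrac{43}{9}$'' appearing in the displayed derivation should read $\tfrac{44}{9}$ for consistency with Equation~\ref{eq-1} and with the stated final factor.
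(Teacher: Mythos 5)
Your proposal is correct and follows essentially the same route as the paper: cite Lemma~\ref{lem:correctness1} for validity, Lemma~\ref{lem:TimeComplexity1} for the running time, and chain $|D|\le\frac{44}{9}|D^*|$, $|T|\le H(5)|C^*|$, $|C^*|\le|D^*|$ (Lemma~\ref{lem:D_VminusD}), and Observation~\ref{obs:D_T} to obtain $\frac{1291}{180}\approx 7.17$. Your flag about the stray ``$\frac{43}{9}$'' in the paper's displayed derivation is also correct --- it is a typo for $\frac{44}{9}$, since $\frac{44}{9}+\frac{137}{60}=\frac{1291}{180}$ is what matches the stated constant.
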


\begin{proof}
    The approximation factor and the time complexity result follow from Equation~\ref{eq-3} and Lemma~\ref{lem:TimeComplexity1}, respectively.
\end{proof}
\section{A $6.03$ factor approximation algorithm for TRDS problem} \label{sec:aproxTotalRoman}
In this section, we propose a $6.03\operatorname{-}$ factor approximation algorithm for the TRDS problem in geometric UDGs.
\subsection{Algorithm}
Now, we describe the procedure for finding a TRDF of a UDG $G=(V,E)$. First, we find a maximal independent set $V_2\subseteq V$ of $G$ to satisfy the Roman property. Next, to satisfy the total property, we choose a set of neighboring vertices $V_1\subseteq V$ such that for each $v\in V_2$, there exists at least one vertex $u\in V\setminus V_2$ and $u\in \Delta (v)$. The steps of finding the sets $V_2$ and $V_1$ in this algorithm are similar to those of finding the sets $D$ and $T$ in Algorithm~\ref{alg1:TDS_UDG_SC}, respectively. See Algorithm~\ref{alg2:TRDF_UDG_SC} (\textit{TRDF-UDG-SC}) for the complete pseudocode, Lemma~$\ref{lem:correctness2}$ for the correctness and Lemma~\ref{lem:TimeComplexity2} for the time complexity of the algorithm.
	\begin{algorithm}[h!]
	\small
		\noindent
		\textbf{Input: }{A unit disk graph, $G=(V,E)$ with known disk centers}  \\
		\textbf{Output: }{A TRDF $f=(V_0,V_1,V_2)$ and the corresponding weight $W(f)$}
		\begin{algorithmic}[1] 
			\State $V_0=\emptyset$, $V_1=\emptyset$, $V_2=\emptyset$, $V^{\prime}=V$
			
			\While{$V^{\prime}\neq \emptyset$} \Comment{\textit{Roman} property of TRDF}\label{alg2:begin_loop1_find_dom} 
		    \State choose a vertex $v\in V^{\prime}$ 
		    \State $V_2=V_2\cup \{v\}$ \label{assign2}
		    \State $V^{\prime}=V^{\prime}\setminus N_G[v]$ 
		    \EndWhile   \label{alg2:end_loop1_find_dom}

			\State $i=1$, $S=\emptyset$
			\For{each $u_i\in V\setminus V_2$}  \label{alg2:begin_loop2_find_neighbor} 
 		    \State $S_i=N_G(u_i)\cap V_2$
		    \State $i=i+1$, $S=\emptyset$
                \State $S=S\cup S_i$ 
		    \EndFor \label{alg2:end_loop2_find_neighbor}
                \State $S'=GreedySetCover(V_2,S)$\label{alg2:V1}
                \State $V_1=\{u_i|S_i\in S'\}$  \Comment{\textit{total} property of TRDF} \label{total}
		    \State $V_0=V\setminus (V_1\cup V_2)$ \label{alg2:V0}
		    \State \Return $f=(V_0,V_1,V_2)$ and  $W(f)=2\times |V_2|+|V_1|$           \label{alg2:return_f}
			\caption{\textit{TRDF-UDG-SC(G)}}
			\label{alg2:TRDF_UDG_SC}
		\end{algorithmic}
	\end{algorithm}

\begin{lemma}\label{lem:correctness2}
The function $f=(V_0,V_1,V_2)$ in \textit{TRDF-UDG-SC} is a TRDF on $G$.
\end{lemma}
\begin{proof}
 \textit{TRDF-UDG-SC} runs in two phases. In the first phase, it finds a maximal independent set $V_2$ of $G$ (since every maximal independent set is a dominating set) and then assigns Roman value $2$ to each vertex in $V_2$  (see line~ \ref{alg2:begin_loop1_find_dom}-\ref{alg2:end_loop1_find_dom} of Algorithm~\ref{alg2:TRDF_UDG_SC}), which ensures the \textit{Roman property} of TRDF. To ensure the \textit{total property} of TRDF, it uses $GreedySetCover(V_2,S)$ to find another set $V_1$  by adding a neighbor vertex for each vertex in $V_2$ (see line~\ref{alg2:V1}-\ref{total} of Algorithm~\ref{alg2:TRDF_UDG_SC}). The remaining vertices in $G$ carry Roman value $0$ (see line~\ref{alg2:V0} of Algorithm~\ref{alg2:TRDF_UDG_SC}). Therefore, combinedly the nominated points in $V_2$ and $V_1$ satisfies the \textit{Roman} and \textit{total} properties.  Hence, the function $f=(V_0,V_1,V_2)$ is a TRDF on $G$.
\end{proof}
\begin{lemma}\label{lem:TimeComplexity2}
  \textit{TRDF-UDG-SC} runs in $O(n\log{k})$ time, where $n$ is the input size and $k$ is the size of the set with Roman value $2$.  
\end{lemma}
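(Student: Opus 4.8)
The plan is to mirror the proof of Lemma~\ref{lem:TimeComplexity1}, since \textit{TRDF-UDG-SC} (Algorithm~\ref{alg2:TRDF_UDG_SC}) is structurally identical to \textit{TDS-UDG-SC}, with $V_2$ and $V_1$ playing the roles of $D$ and $T$. First I would impose the same geometric data structure: place the $n$ disk centers in a bounding rectangle of the plane, partition it into a grid of $1\times 1$ cells, and assign the point at coordinate $(p_x,p_y)$ to the cell indexed by $[\floor{p_x},\floor{p_y}]$. I would maintain the vertices selected into $V_2$ in a height-balanced binary search tree keyed by the non-empty cells. The two facts driving the bound are both reused without change: (i) a candidate point is independent of every already-chosen point lying outside the $9$ cells surrounding it, so only those $9$ cells must be probed; and (ii) each $1\times 1$ cell can contain at most $3$ mutually independent unit disks (a cell of perimeter $4$ cannot hold more without violating independence), so each cell stores $O(1)$ members of $V_2$.

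Next I would bound the first phase (lines~\ref{alg2:begin_loop1_find_dom}--\ref{alg2:end_loop1_find_dom}), in which the maximal independent set $V_2$ is built. Processing one candidate $p$ means checking the $9$ surrounding cells for an existing $q\in V_2$ with $p\in\Delta(q)$, and inserting $p$ into the balanced tree if none exists; with $k=|V_2|$, each such query/insertion costs $O(\log k)$. Over all $n$ points this gives $O(n\log k)$. For the set-cover construction (lines~\ref{alg2:begin_loop2_find_neighbor}--\ref{alg2:end_loop2_find_neighbor}), each $S_i=N_G(u_i)\cap V_2$ is recovered by scanning the same $9$ cells around $u_i$, again $O(\log k)$ per point, hence $O(n\log k)$ in total.

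Finally I would account for \textit{GreedySetCover}$(V_2,S)$ on line~\ref{alg2:V1}. Here $|V\setminus V_2|\le n$, and by Lemma~\ref{lem:5disks} each subset satisfies $|S_i|\le 5$ because $V_2$ is a maximal independent set of the UDG; with subsets of constant size over a universe of size $k\le n$, the greedy routine can be implemented in $O(n)$ time. Summing the three contributions yields the claimed $O(n\log k)$, and assembling $V_0,V_1,V_2$ and computing $W(f)=2|V_2|+|V_1|$ on line~\ref{alg2:return_f} is clearly dominated by this cost. I do not expect a genuine obstacle: the statement is an almost verbatim transcription of Lemma~\ref{lem:TimeComplexity1}, and the only point needing care is justifying the $O(\log k)$ cost per point, which rests on the balanced-tree-over-cells structure together with the constant bound on independent disks per cell.
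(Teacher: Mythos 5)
Your proposal is correct and follows exactly the route the paper intends: the paper's own proof of this lemma is just the one-line remark that it is ``similar to Lemma~\ref{lem:TimeComplexity1},'' and your write-up is precisely that argument (grid cells, balanced tree over non-empty cells, at most $3$ independent disks per cell, $O(\log k)$ per point, linear-time greedy set cover) transplanted to $V_2$ and $V_1$. No gaps.
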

\begin{proof}
    The proof is similar to Lemma~\ref{lem:TimeComplexity1}.
\end{proof}
\subsection{Analysis}
Let $D^*$ be an optimal dominating set, $f^*$  be an optimal TRDF of the given UDG $G$,  and $C^*$ be an optimal set cover of the set cover instance $<V_2,S>$, where $W(f^*)$ is the weight associated with $f^*$. Since $V_2$ is a maximal independent set, from Lemma~\ref{lem:sub-5}, we have 
\begin{equation}\label{eq-4}
    |V_2|\leq \frac{44}{9}|D^*|
\end{equation}
From Theorem~\ref{th:DomRom}, we conclude that 
\begin{equation} \label{eq-5}
    2|D^*|\leq W(f^*)
\end{equation} 
Since $f=(V_0,V_1,V_2)$ is a TRDF on $G$ (see Lemma~\ref{lem:correctness2}), the associated weight $W(f)$ of the TRDF $f$ is given by:
\begin{equation} \label{eq-6}
\begin{split}
W(f)& =2\times |V_2|+|V_1|\\
& \leq 2\times \frac{44}{9}|D^*|+H(5)\times |C^*| \text{ (refer to Equation~\ref{eq-4} and Theorem~\ref{th:cormen})}\\
& \leq \frac{44}{9}\times 2|D^*|+\frac{137}{60}\times |D^*|  \text{ (refer to Lemma~\ref{lem:D_VminusD} )}\\
& \leq \frac{44}{9}\times W(f^*)+\frac{137}{120}\times 2|D^*| \\
& \leq \frac{44}{9}\times W(f^*)+\frac{137}{120}\times W(f^*)\text{ (refer to Equation~\ref{eq-5})}\\
& =\frac{2171}{360}\times W(f^*)\\ 
&\approx 6.03 \times  W(f^*) 
\end{split}
\end{equation}
\begin{theorem}
 The proposed algorithm (\textit{TRDF-UDG-SC}) gives a $6.03\operatorname{-}$ factor approximation result for the TRDF problem in UDGs. The algorithm runs in $O(n\log{k})$ time, where $n$ is the input size and $k$ is the size of the set with Roman value $2$ (i.e., $V_2$).  
\end{theorem}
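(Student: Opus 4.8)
The plan is to prove the two assertions of the theorem — the approximation guarantee and the running time — separately, assembling results already established in this section. The running-time half is immediate: by Lemma~\ref{lem:TimeComplexity2} the algorithm \textit{TRDF-UDG-SC} terminates in $O(n\log k)$ time, where $k=|V_2|$ is the size of the set receiving Roman value $2$, so no further work is required there.

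For the approximation factor, I would first invoke Lemma~\ref{lem:correctness2} to certify that the output $f=(V_0,V_1,V_2)$ is a genuine TRDF, so that $W(f)=2|V_2|+|V_1|$ is a legitimate quantity to compare against the optimum weight $W(f^*)$. The strategy is then to bound $|V_2|$ and $|V_1|$ separately in terms of the optimal dominating set size $|D^*|$, and only at the end to translate the resulting $|D^*|$-bound into a $W(f^*)$-bound.

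For $|V_2|$: since $V_2$ is a maximal independent set, Lemma~\ref{lem:sub-5} yields $|V_2|\le\frac{44}{9}|D^*|$ (Equation~\ref{eq-4}). For $|V_1|$: this set is produced by \textit{GreedySetCover} on the instance $\langle V_2,S\rangle$ whose subsets satisfy $|S_i|\le 5$ by Lemma~\ref{lem:5disks}; hence Theorem~\ref{th:cormen} gives $|V_1|\le H(5)\,|C^*|=\frac{137}{60}|C^*|$, and Lemma~\ref{lem:D_VminusD} gives $|C^*|\le|D^*|$. Combining these yields $W(f)\le\frac{88}{9}|D^*|+\frac{137}{60}|D^*|$.

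The final and only delicate step is the conversion to $W(f^*)$, for which Theorem~\ref{th:DomRom} supplies $2|D^*|\le W(f^*)$ (Equation~\ref{eq-5}). The point to handle carefully is to rewrite each coefficient of $|D^*|$ as a coefficient of $2|D^*|$ \emph{before} substituting, rather than applying the bound to the whole sum at once: $\frac{88}{9}|D^*|=\frac{44}{9}(2|D^*|)\le\frac{44}{9}W(f^*)$ and $\frac{137}{60}|D^*|=\frac{137}{120}(2|D^*|)\le\frac{137}{120}W(f^*)$. Summing gives $W(f)\le\left(\frac{44}{9}+\frac{137}{120}\right)W(f^*)=\frac{2171}{360}W(f^*)\approx 6.03\,W(f^*)$, exactly the chain recorded in Equation~\ref{eq-6}. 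I do not anticipate a genuine obstacle here, since all the conceptual weight rests in Lemma~\ref{lem:D_VminusD} and Theorem~\ref{th:DomRom}; the remaining effort is purely arithmetic bookkeeping, the one subtlety being the per-term application of $2|D^*|\le W(f^*)$.
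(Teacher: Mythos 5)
Your proposal is correct and follows essentially the same route as the paper: the paper's proof of this theorem simply cites Equation~\ref{eq-6} and Lemma~\ref{lem:TimeComplexity2}, and your chain of bounds (Lemma~\ref{lem:correctness2} for validity, Equation~\ref{eq-4} for $|V_2|$, Theorem~\ref{th:cormen} with Lemma~\ref{lem:D_VminusD} for $|V_1|$, then the per-term substitution $2|D^*|\le W(f^*)$ from Theorem~\ref{th:DomRom}) is exactly the derivation recorded in Equation~\ref{eq-6}, arriving at the same constant $\frac{2171}{360}\approx 6.03$.
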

\begin{proof}
    The approximation factor and the time complexity result follow from Equation~\ref{eq-6} and Lemma~\ref{lem:TimeComplexity2}, respectively.
\end{proof}
\section{Conclusion}\label{sec:conclusion}
In this paper, we studied the total Roman dominating set problem in unit disk graphs and proved that the problem is NP-complete. We proposed a $7.17\operatorname{-}$ factor and a $6.03\operatorname{-}$ factor approximation algorithm for TDS and  TRDS 
 problem in UDGs, respectively. The running time of both the algorithms is $O(n\log{k})$, where $n$ is the number of vertices in the input UDG and $k$ is the size of the set $D$ and $V_2$ in TDS and TRDS problem, respectively.



\small
\bibliographystyle{abbrv}
\bibliography{Paper3.bib}



\end{document}